\pdfoutput=1
\documentclass{article}
\usepackage{iclr2027_conference,times}

\usepackage[utf8]{inputenc}
\usepackage[T1]{fontenc}
\usepackage{amsmath,amssymb,amsthm}
\usepackage{enumitem}
\usepackage[lambda,adversary,advantage,operators,sets,landau,probability,notions,logic,ff,mm,primitives,events,complexity,oracles,asymptotics,keys]{cryptocode}
\usepackage{todonotes}
\usepackage{booktabs}
\usepackage{multirow}
\usepackage[hidelinks]{hyperref}
\usepackage{url}
\usepackage{xcolor}
\usepackage{graphicx}
\usepackage{tikz}
\usetikzlibrary{positioning, arrows.meta, shapes.geometric}
\usepackage{stmaryrd} 
\usepackage{cleveref}

\iclrfinalcopy 
\newcommand{\pdrartifacturl}{https://github.com/tremblaythibaultl/pdr-artifact}
\newcommand{\vpirurl}{https://github.com/tremblaythibaultl/vpir}

\newtheorem{definition}{Definition}
\newtheorem{theorem}{Theorem}

\newcommand{\R}{\mathcal{R}}
\newcommand{\mat}[1]{\mathbf{#1}}
\newcommand{\vect}[1]{\mathbf{#1}}

\newcommand{\calE}{\mathcal{E}}

\newcommand{\calU}{\mathcal{U}}

\newcommand{\innerp}[2]{\left\langle #1, #2\right\rangle}

\newcommand{\acc}{\mathsf{Accept}}

\newcommand{\inftynorm}[1]{\| #1 \|_{\infty}}

\newcommand{\cmark}{\ensuremath{\checkmark}}
\newcommand{\xmark}{\ensuremath{\times}}

\title{Quantization Enables Private Dense Retrieval against Malicious Service Providers}
\author{\mdseries
\begin{tabular}[t]{@{}l@{\hspace{3em}}l@{}}
\bf Louis Tremblay Thibault & \bf Sofiane Azogagh \\
\'Ecole de technologie sup\'erieure and Mila & Eurecom \\[1.5ex]
\bf Marc-Olivier Killijian & \bf Ulrich A\"ivodji \\
Universit\'e du Qu\'ebec \`a Montr\'eal & \'Ecole de technologie sup\'erieure and Mila
\end{tabular}}

\begin{document}

\maketitle
\lhead{Preprint.} 

\begin{abstract}
Dense retrieval, the key component of Retrieval Augmented Generation (RAG), retrieves the most relevant documents by comparing dense vector representations of queries and passages from a large corpus. In privacy-sensitive applications, the server observes the query and controls which evidence is returned, creating both confidentiality and integrity risks. We formulate private dense retrieval as providing query privacy and retrieval integrity against a malicious server, and develop a two-round cryptographic protocol that provides both guarantees. Our protocol reduces private and verifiable retrieval to multiplication of a committed matrix by an encrypted vector and uses low-bit quantization to make this computation practical. We evaluate the resulting trade-off between cryptographic cost, retrieval quality, and downstream RAG accuracy across six embedding models, four language models, and corpora of up to 2.68 million passages. Our results show that, with a clipped quantizer, three-bit quantization largely preserves retrieval quality and downstream accuracy, while a private query over a corpus the size of a clinical reference requires one to three minutes of server time. These results suggest that private dense retrieval is already practical for moderately sized, privacy-sensitive corpora when minute-scale latency is acceptable.
\end{abstract}

\section{Introduction}
\label{sec:intro}

Dense retrieval is a central paradigm for large-scale information access.
Bi-encoder models~\citep{karpukhin2020dense,reimers2019sentence,ni2022large} represent documents and queries as vectors in a shared embedding space and answer queries by nearest-neighbor search.
This is the mechanism underlying semantic search, open-domain question answering and the retrieval stage of retrieval-augmented generation (RAG)~\citep{lewis2020retrieval}.
For exact retrieval by inner-product scoring, the interaction can be expressed as a matrix-vector product:
the server holds an embedding matrix $\mat{A} \in \mathbb{R}^{m \times n}$, where each row represents a document, and the user submits a query embedding $\vect{x} \in \mathbb{R}^n$. In such a setting, retrieval reduces to computing the score vector $\mat{A}\vect{x}$ and fetching the highest-scoring documents.

However, this interaction exposes query embeddings to the server in the clear, potentially revealing sensitive information about the user's intent. In particular, these embeddings can be linked across queries and inverted to recover information about the underlying text~\citep{morris2023text,song2020leakage}.
Consider clinical decision support, where AI-assisted synthesis of the medical literature is increasingly common\footnote{$39\%$ of US physicians report incorporating summaries of medical research and standards of care into practice, up from $13\%$ two years earlier~\citep{ama2026ai}.} and physicians submit queries to tools such as \emph{UpToDate} and \emph{OpenEvidence}, which report reaching over three million clinicians worldwide and a majority of US physicians, respectively~\citep{uptodate2026,openevidence2026}.
Every such query is visible to the provider, to a compromised insider or to anyone who breaches the service or its logs.\footnote{Medical IT providers are attractive targets for cyberattacks in practice, as in the 2026 Cegedim Sant\'e incident that exposed 15~million medical files~\citep{cegedim}.}
By cross-referencing with a clinician's appointment schedule, an adversary could link individual patients to the topics their physician searched and reconstruct a working diagnosis.
Such linkage can disclose protected health information subject to the HIPAA Privacy Rule~\citep{HIPAA}.

Beyond query privacy, the client also needs guarantees on the integrity of the retrieved evidence. 
Indeed, the server currently decides what evidence reaches the user or the downstream language model.
One that silently rescores documents or substitutes the returned passage can steer diagnoses or generated answers, and an encrypted-but-unverified protocol gives the clinician no way to notice.
Existing systems provide one of these two guarantees but not both, and those that encrypt the query assume an ``honest-but-curious'' server, which contradicts the threat model that motivates encryption in the first place (\S\ref{sec:related}).
Retrieval integrity also bounds \emph{indirect prompt injection}~\citep{Greshake2023NotWY} through the retrieval channel: once a corpus has been audited and committed, neither the provider nor an attacker with write access to the corpus can slip adversarial instructions into the passages that reach the reader, since the client accept only content consistent with the audited digest (\S\ref{sec:protocol}).

\begin{figure}[t]
\centering
\includegraphics[width=0.9\textwidth,trim={6 20 12 10},clip]{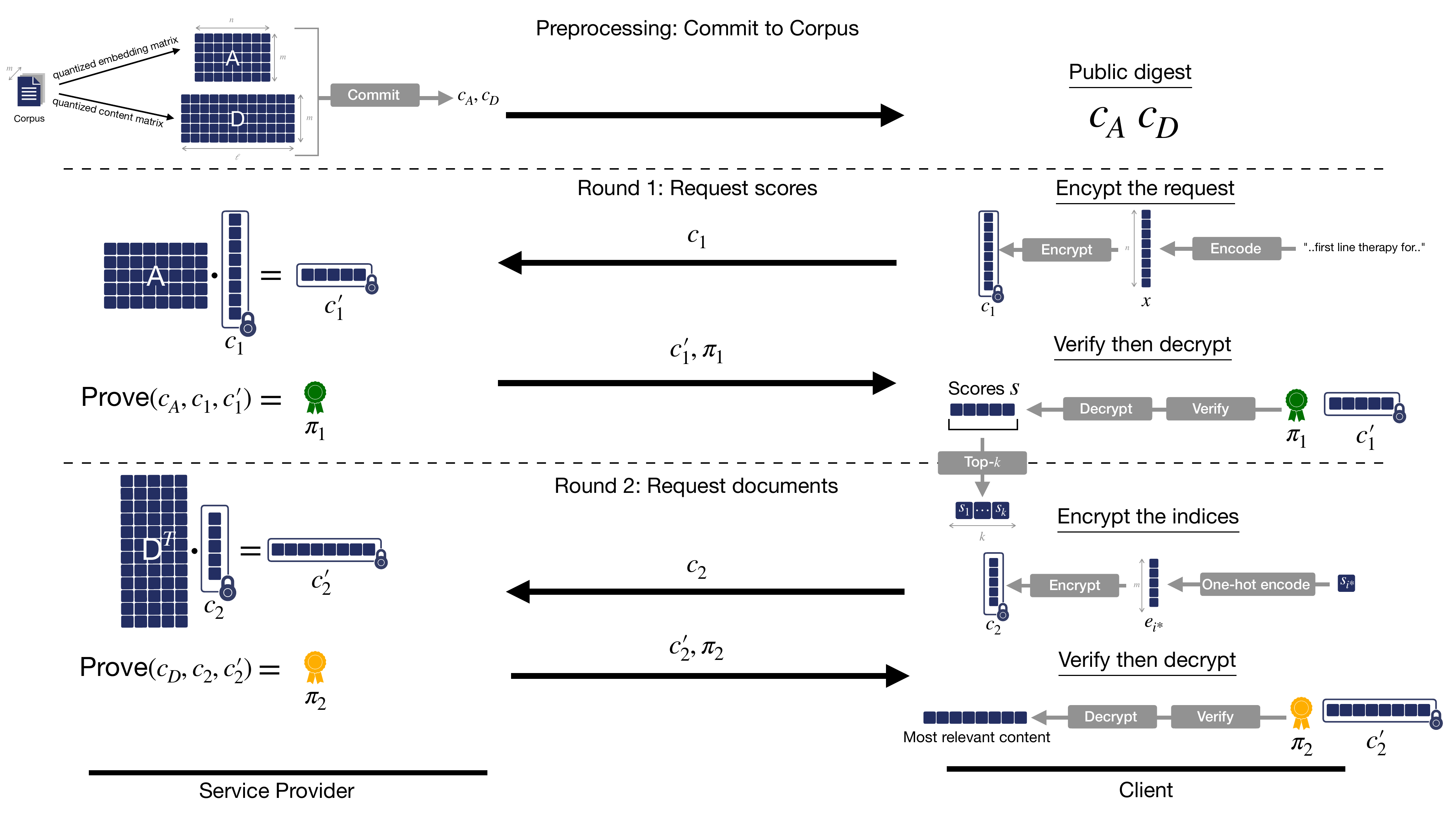}
\caption{Private dense retrieval. Round~1 returns encrypted scores, from which the client selects $i^*$; Round~2 fetches $\mat{D}_{i^*}$. Each proof is verified against the committed $\mat{A}$ or $\mat{D}$ before decryption.}
\label{fig:overview}
\end{figure}

\paragraph{Background.}
Homomorpic encryption (HE)~\citep{10.1145/1536414.1536440} is a form of encryption that allows computations on encrypted data without decrypting it first.
The party holding the encryption key can decrypt the result of the computation, but the party performing the computation learns nothing about the underlying data.
Over nearly two decades, HE has evolved from a theoretical topic to a vast field of privacy-preserving applications.
For example, HE has been used to enable Private Information Retrieval (PIR)~\citep{chor1998pir}, where a client can retrieve an item from a server's database without revealing to the server which item it is has retrieved.
It is crucial to understand that homomorphically encrypted ciphertexts must contain \emph{noise} for security.
If not carefully managed, noise accumulation over homomorphic operations can lead to incorrect computation results.
As such, smaller noise allows for more operations over the ciphertexts, but guarantees a lower security level.
This trade-off is central to our work and we perform in \S\ref{subsec:correctness} the careful balancing act of choosing cryptographic parameters that provide adequate levels of both security and correctness for the tasks at hand.

\paragraph{Our approach.}
Our starting point is the observation that the two stages of dense retrieval, scoring and document fetch, reduce to a product between a public matrix held by the server (\emph{e.g., }provider) and a private, encrypted vector held by the client (\emph{e.g.,} clinician).
We instantiate both with a single maliciously secure primitive, verifiable matrix-vector multiplication under HE~\citep{cryptoeprint:2026/027}.
The server cryptographically commits to a matrix, multiplies its committed matrix by the client's encrypted vector and returns the encrypted result with a succinct proof, which the client verifies before decrypting.
The primitive guarantees integrity by construction, and we show in \S\ref{sec:protocol} that verifying before decrypting also closes the reaction-attack channel~\citep{cryptoeprint:2016/1164} that honest-but-curious designs leave open.
The resulting two-round protocol, illustrated in Figure~\ref{fig:overview}, is exact (every document is scored, with no clustering or hashing approximation), secure against a malicious server (any manipulation of scores or of the returned document is detected except with negligible probability), client-lightweight (per round the client encrypts one vector, verifies one proof and decrypts one result) and modular (both rounds use one primitive behind one interface, so any faster instantiation may be swapped in directly).

\paragraph{Setting.}
The naive way to hide a query is to send the corpus to the client and have it search locally.
That option is not available here: the clinical references that physicians consult are proprietary and licensed per seat, so the provider grants query-time access rather than a copy.
As such, the setting we consider is a server-held corpus queried by a client that must not reveal its queries.
We target deferred-response settings, such as literature requests to medical librarians or tumor-board preparation, where users expect a researched answer rather than an immediate response and minute-scale latency is acceptable; we evaluate the protocol's cost in this regime (\S\ref{subsec:costs}).

\paragraph{Contributions.}
\begin{enumerate}[noitemsep,topsep=2pt]
    \item We formalize private dense retrieval as requiring both query privacy and retrieval integrity against a malicious server, and reduce it to two instances of verifiable matrix-vector multiplication~(\S\ref{sec:problem}). 
    We give a two-round protocol realizing this functionality, with exact retrieval and a verify-before-decrypt mechanism that provably blocks reaction attacks~(\S\ref{sec:protocol}).
    \item Using the protocol as an instrument, we experimentally locate the operating point of private dense retrieval at $b{=}3$ bits of quantization: retrieval quality and downstream RAG accuracy hold there for six encoders and four readers, on corpora of up to $2.68$M passages~(\S\ref{subsec:retrieval}, \S\ref{subsec:downstream}).
    \item We measure the cost of a private, verifiable query in time and dollars at corpus sizes up to $10^6$ chunks, and show that bit-width buys performance, with a $\approx 2 \times$ increase between $b{=}2$ and $b{=}3$~(\S\ref{subsec:correctness}, \S\ref{subsec:costs}).
    At $b{=}3$ a corpus the size of a clinical reference is served in one to three minutes, while the literature behind it takes hours to serve.
\end{enumerate}
\section{Private Dense Retrieval}
\label{sec:problem}

The setting comprises two interactive parties.
The server holds a corpus of $m$ documents, represented by an embedding matrix $\mat{A} \in \ZZ_p^{m \times n}$ (row $i$ is the quantized embedding of document $i$) and a content matrix $\mat{D} \in \ZZ_p^{m \times \ell}$ (row $i$ is the content of document $i$, padded to length $\ell$).
The client holds a private query embedding $\vect{x} \in \ZZ_p^n$.
Embeddings are quantized from $\mathbb{R}$ to $\ZZ_p$ by scalar quantization (Appendix~\ref{sec:quantization}). This is further discussed in \S\ref{sec:experiments}.
This abstracts a common deployment pattern in RAG and enterprise search: the client trusts the embedding model that produced $\vect{x}$, but does not trust the retrieval service with the confidentiality of the query or the integrity of the returned evidence.

\begin{definition}[Private Dense Retrieval]
\label{def:pdr}
A \emph{private dense retrieval protocol} allows the client to
(1)~obtain the similarity scores $\vect{s} = \mat{A} \cdot \vect{x} \in \ZZ_p^m$, and
(2)~retrieve the document $\mat{D}_{i^*}$ for $i^* = \arg\max_i s_i$,
subject to:
\begin{enumerate}[noitemsep,topsep=1pt]
  \item \textbf{privacy}: the server learns no information about $\vect{x}$ or $i^*$ beyond public parameters; and
  \item \textbf{integrity}: the client accepts only scores and documents consistent with the server's committed database, except with negligible probability.
\end{enumerate}
\end{definition}

The server is actively malicious: it may deviate arbitrarily from the protocol to learn the query or to alter the result.
As in standard PIR, we do not require database privacy against the client; our confidentiality goal concerns the client's query.
The definition extends to top-$k$ retrieval (\S\ref{sec:topk}).

\paragraph{One primitive suffices.}
\label{sec:primitive}
Both requirements of Definition~\ref{def:pdr} reduce to verifiable multiplication of a public plaintext matrix by an encrypted vector.
The primitive provides privacy (ciphertexts reveal nothing about the encrypted vector by semantic security under the GLWE assumption), soundness (no server strategy makes the client accept anything but the product with the committed matrix, except with negligible probability) and succinctness (digest and proof are small, and verifying is far cheaper than computing the product).
Formal definitions are in Appendix~\ref{sec:secure-matvec}; \citet{cryptoeprint:2026/027} give a concrete instantiation, which our protocol treats as a module. 
Improvements to the primitive thus transfer directly to our protocol.

\paragraph{Protocol.}
\label{sec:protocol}

The protocol runs two rounds of the same sub-protocol (Figure~\ref{fig:overview}).
Once per database, the server commits to $\mat{A}$ and $\mat{D}$ and publishes the digests, which remain fixed for the database's lifetime.
Updates require recomputing only the affected commitment, a cost amortized over many queries.
In Round~1, the client sends $\vect{c}_1 \gets \enc(\sk_1, \vect{x})$ and the server returns $\vect{c}'_1 \gets \mat{A} \cdot \vect{c}_1$ with a proof $\pi_1$ against the committed $\mat{A}$.
From this the client decrypts the exact score vector and computes $i^* \gets \arg\max_i s_i$ locally.
In Round~2, the client sends $\vect{c}_2 \gets \enc(\sk_2, \vect{e}_{i^*})$, for $\vect{e}_{i^*} \in \{0,1\}^m$ the the one-hot encoding of $i^*$.
The server returns $\mat{D}^\top \cdot \vect{c}_2$ with a proof $\pi_2$ against the committed $\mat{D}$, from which the client decrypts $\mat{D}_{i^*}$.
Each round uses a fresh key and the client aborts unless verification succeeds.
The rounds are linked only through $i^*$, which is computed locally. 
The server sees two semantically secure ciphertext vectors under independent keys and learns no information about $\vect{x}$ or $i^*$ beyond public parameters.
The client's Round-1 work, namely downloading $m$ ciphertext slots, decrypting them and scanning for the maximum, is linear in $m$.
We discuss scaling in \S\ref{sec:discussion}.

\paragraph{Verification before decryption.}
That the client verifies before decrypting is essential to security.
If it decrypted an unverified response, its observable behavior (abort, retry, or which document it fetches in Round~2) would become a function of the plaintext.
This is the side channel that reaction attacks exploit~\citep{cryptoeprint:2016/1164}.
Verifying first makes that behavior depend only on the proof's validity, which the server already knows.

\begin{theorem}[Informal; proof sketch in Appendix~\ref{sec:proof}]
\label{thm:security}
If the primitive satisfies the properties of \S\ref{sec:primitive}, the protocol realizes Definition~\ref{def:pdr}: an actively malicious server learns  no information about $\vect{x}$ or $i^*$ beyond public parameters, and cannot cause the client to accept scores or a document inconsistent with the committed database, except with negligible probability.
\end{theorem}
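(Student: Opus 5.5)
The argument splits into the two requirements of Definition~\ref{def:pdr}. Privacy is a simulation argument that reduces to the primitive's semantic security under GLWE. Integrity is a union bound over the two rounds, each resting on the primitive's soundness.

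\textbf{Privacy.} I would build a simulator $\mathcal{S}$ that knows only the public parameters and the published digests. It interacts with an arbitrary malicious server $\mathcal{A}$, sending $\enc(\sk_1,\vect{0})$ in Round~1 and $\enc(\sk_2,\vect{0})$ in Round~2. To reproduce the client's observable behavior, $\mathcal{S}$ runs the public verifier on each server response $(\vect{c}'_j,\pi_j)$ and aborts exactly when verification fails. This is the point where verify-before-decrypt matters. In the real protocol, the client's abort-or-continue decision in each round is a function only of the transcript and public data, never of $\sk_j$ or of any plaintext. The only plaintext-dependent value the client produces is $i^*$, and it appears to the server only inside $\vect{c}_2$.

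I would then use a hybrid argument. $H_0$ is the real execution. $H_1$ replaces $\vect{c}_2$ with an encryption of $\vect{0}$ under $\sk_2$. $H_2$ additionally replaces $\vect{c}_1$ with an encryption of $\vect{0}$ under $\sk_1$, and $H_2$ is exactly $\mathcal{S}$.

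\emph{$H_0 \approx H_1$.} A distinguisher yields a GLWE adversary $\mathcal{B}_2$. $\mathcal{B}_2$ generates $\sk_1$ itself, plays Round~1 honestly, decrypts to obtain $i^*$, and submits $(\vect{e}_{i^*},\vect{0})$ as its challenge pair under $\sk_2$. This step relies on the keys being independent.

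\emph{$H_1 \approx H_2$.} A reduction $\mathcal{B}_1$ embeds the challenge as $\vect{c}_1$. Since the Round~2 message in $H_1$ no longer depends on $i^*$, $\mathcal{B}_1$ never needs $\sk_1$. The Round~1 abort decision is computed with the public verifier.

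\textbf{Integrity.} Suppose the client accepts in Round~1 with $\vect{c}'_1 \neq \mat{A}\cdot\vect{c}_1$ relative to the committed $\mat{A}$, or in Round~2 with a response inconsistent with $\mat{D}^\top\vect{c}_2$. Either event directly breaks the soundness of the primitive against the corresponding digest. The reduction plays the client for the other round itself. A union bound then gives negligible total failure probability. Conditioned on both responses being the honest products, decryption correctness of the primitive (under the noise parameters of \S\ref{subsec:correctness}) implies that the client outputs $\vect{s}=\mat{A}\vect{x}$ and $\mat{D}^\top\vect{e}_{i^*}=\mat{D}_{i^*}$. Binding of the commitment ensures that ``the committed database'' is well defined.

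\textbf{Main obstacle.} The delicate step is $H_1 \approx H_2$: showing that the Round~1 abort and Round~2 behavior leak nothing about $\vect{x}$ even against a malformed response. Two things must hold. First, the verifier must be publicly computable, or at least independent of $\sk_1$, so that $\mathcal{B}_1$ can simulate aborts. Second, the successful Round~2 message must itself have already been made plaintext-independent by the preceding hybrid. I would check the appendix definition of the primitive to confirm that verification uses only the digest, the ciphertext and the proof. If it instead requires secret verifier state, I would strengthen the assumption to require soundness that holds even given that state.
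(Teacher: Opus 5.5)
Your proposal is correct and follows the paper's own sketch. Privacy comes from GLWE semantic security with independent per-round keys, plus verify-before-decrypt, which makes the client's abort decision a function only of the proof's validity. Integrity comes from per-round soundness against each commitment, combined by a union bound. Your explicit hybrid order (replacing the Round-2 ciphertext before the Round-1 one), the explicit public-verifiability requirement, and the separation of ciphertext-level soundness from decryption correctness make rigorous what the paper asserts informally: that ``there is no cross-round linkage'' and that the server ``can compute [validity] itself.''
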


\paragraph{Top-$k$ retrieval.}
\label{sec:topk}
RAG pipelines typically consume $k > 1$ passages.
After Round~1 the client knows the full score vector, so it can select any $k$ indices and fetch each with an independent Round-2 query, multiplying the second round's workload by $k$.
Batching the $k$ queries into one server request multiplies communication by $k$ but barely increases latency, since the online server work of~\citet{cryptoeprint:2026/027} is dominated by a step independent of the number of batched queries (Appendix~\ref{sec:topk-batch}).

\section{Related Work}
\label{sec:related}

\begin{table}[b]
\centering
\caption{Private and verifiable retrieval systems. \emph{Exactness}: scores are exact inner products rather than approximate search. \emph{Malicious security}: the stated privacy and/or integrity guarantees hold against an actively deviating server.}
\label{tab:comparison}
\begin{tabular}{@{}lcccc@{}}
\toprule
 & Privacy & Integrity & Exactness & Malicious security \\
\midrule
Tiptoe~\citep{10.1145/3600006.3613134} & \cmark & \xmark & \xmark & \xmark \\
PIR-RAG~\citep{wang2025pirragprivateinformationretrieval} & \cmark & \xmark & \xmark & \xmark \\
PPMI~\citep{bae2025ppmiprivacypreservingllminteraction} & \cmark & \xmark & \cmark & \xmark \\
Compass~\citep{zhu2024compass} & \cmark & \xmark & \xmark & \xmark \\
VeriRAG~\citep{cryptoeprint:2026/637} & \xmark & \cmark & \xmark & \cmark \\
zkRAG~\citep{cryptoeprint:2026/709} & \xmark & \cmark & \xmark & \cmark \\
\textbf{Ours} & \cmark & \cmark & \cmark & \cmark \\
\bottomrule
\end{tabular}
\end{table}

\paragraph{Encrypted and verifiable retrieval.}
Table~\ref{tab:comparison} situates our protocol relative to prior work.
We take dense retrieval by maximum inner product search~\citep{karpukhin2020dense,reimers2019sentence,ni2022large} as given and make the query private and the result verifiable.
The natural tool for hiding the query is HE, as PPMI~\citep{bae2025ppmiprivacypreservingllminteraction} does for private LLM interaction.
Tiptoe~\citep{10.1145/3600006.3613134} and PIR-RAG~\citep{wang2025pirragprivateinformationretrieval} instead search published cluster centroids locally and fetch documents by private information retrieval (PIR)~\citep{chor1998pir}, for which practical single-server lattice-based constructions include XPIR~\citep{POPETS:ABFK16}. Other systems perform the search itself privately: SANNS~\citep{chen2020sanns}, \citet{servanschreiber2022pann}, Compass~\citep{zhu2024compass} and BiSON~\citep{das2026bison} run approximate $k$-NN under two-party computation or oblivious RAM, at second-to-millisecond latencies on million- to billion-scale corpora.
These systems assume an honest-but-curious server, or two non-colluding servers, and do not provide integrity against an actively deviating server. In interactive encrypted protocols, such deviations can additionally enable reaction attacks that compromise privacy~\citep{cryptoeprint:2016/1164}.
Conversely, VeriRAG~\citep{cryptoeprint:2026/637} and zkRAG~\citep{cryptoeprint:2026/709} prove in zero knowledge that the server ran an approximate search over a committed database, but send the query in the clear.
Our work instead follows the verifiable-computation-over-ciphertext paradigm~\citep{bois2021verifiable,ganesh2023rinocchio,chatel2024veritas,cryptoeprint:2026/027}, enabling query privacy and retrieval integrity while preserving exactness.

\paragraph{Low-bit embeddings and the retrieval-metric/downstream gap.}
Several methods have been proposed in recent years for compressing embeddings for more efficient downstream tasks, including binary hashing~\citep{yamada2021bpr}, jointly learned product quantization~\citep{zhan2021jpq}, Matryoshka representation learning~\citep{kusupati2022matryoshka}, int8-trained commercial encoders~\citep{cohere2023int8} and quantization-aware open ones~\citep{yu2024arctic}.
Clipping the quantization range to a percentile of the values, rather than to their maximum, is standard in post-training quantization~\citep{banner2019posttraining}.
That literature primarily optimizes memory footprint and search efficiency, with representation size scaling linearly with bit-width.
In our encrypted setting, every bit consumes a fixed share of a noise budget that only larger cryptographic parameters can compensate for (\S\ref{subsec:correctness}), which imposes a different cost model.
Separately, ranking metrics are known to predict end-task RAG accuracy poorly, as retrieval and generation quality decouple~\citep{salemi2024erag,cuconasu2024noise}.
Section~\ref{subsec:downstream} studies this decoupling as a function of bit-width, which also sets cryptographic cost, allowing us to select parameters based on downstream rather than retrieval quality alone.

\section{Experimental Analysis}
\label{sec:experiments}

The protocol preserves exact dense retrieval over quantized embeddings: the only approximation is the $b$-bit quantization itself, while all subsequent arithmetic is computed exactly in $\ZZ_p$, with $p \geq 2NB^2$.
In fact, the GLWE plaintext modulus $p$ (cf. \S\ref{sec:glwe}) must be large enough so that $\ZZ_p$ contains the quantized embeddings and their inner products, without overflow.
A larger $b$ requires a larger $p$, which drives up cryptographic parameters.
Holding the retrieval task and security target fixed, the bit-width $b$ is therefore the key parameter coupling retrieval quality and cryptographic cost.
Our experiments therefore ask how far $b$ can be reduced before the cryptographic savings cease to justify the loss in retrieval and downstream quality. We measure this trade-off holistically: \S\ref{subsec:retrieval} measures retrieval quality across bit-widths and corpus sizes, \S\ref{subsec:correctness} gives the conversion from bit-width to cryptographic cost, \S\ref{subsec:costs} the concrete costs at scale and \S\ref{subsec:downstream} the downstream performance.

\paragraph{Corpora.}
We evaluate on two biomedical retrieval corpora, chosen to match the setting of \S\ref{sec:intro}: \emph{SciFact}~\citep{wadden2020fact}, scientific claim verification over biomedical abstracts ($m{=}5{,}183$ documents, $300$ test claims), and \emph{NFCorpus}~\citep{thakur2021beir}, medical retrieval pairing nutrition and health queries with linked PubMed documents ($m{=}3{,}633$, $323$ queries).
Both are BEIR~\citep{thakur2021beir} tasks, so floating-point baselines are comparable with published numbers.
The retrieval unit of the protocol is a chunk of at most $1{,}536$ bytes: $512$ matrix cells each containing $3$ bytes (cf. Table~\ref{tab:benchmarks}).
This allows more than half of documents to fit into a single chunk, all the while keeping the cost of the second round low.
We cut at sentence boundaries with the title repeated in every chunk. 
This accommodates the clinical references of \S\ref{sec:intro}, which are typically several pages long.
The downstream evaluation of \S\ref{subsec:downstream} adds \emph{BioASQ}~\citep{krithara2023bioasq} for biomedical factoid question answering, and the scale study of \S\ref{subsec:costs} uses \emph{Natural Questions}~\citep{kwiatkowski2019natural}, which also serves \S\ref{subsec:downstream} as an out-of-domain control.
BioASQ and Natural Questions are passage collections already fitting within one chunk.
Retrieval quality in \S\ref{subsec:retrieval} is measured on the chunked corpora, ranked per chunk and scored per document, so that the relevance labels and the floating-point baselines remain the published document-level ones; Appendix~\ref{sec:encoder-suite} gives the same table on whole abstracts.

\subsection{Retrieval quality under quantization}
\label{subsec:retrieval}

We evaluate six embedding models of similar performance on both corpora, so that training for low-bit readout is the variable.
Two were trained for low-bit readout: Cohere \texttt{embed-english-v3.0}, trained for int8 and binary output~\citep{cohere2023int8}, and Snowflake \texttt{arctic-embed-l-v2.0}, whose training was quantization-aware~\citep{yu2024arctic,arctic2024card}.
Four were not: \texttt{bge-large-en-v1.5}~\citep{xiao2023bge}, \texttt{e5-large-v2}~\citep{wang2022e5}, \texttt{gte-large}~\citep{li2023gte} and \texttt{mxbai-embed-large-v1}~\citep{li2023angle,mxbai2024card}.
The last is a control: it  advertises int8 and binary use without claiming quantization-aware training.
For each $b \in \{2,\dots,8\}$ we run the full two-round protocol.
Table~\ref{tab:quality-sweep} reports, for $b \in \{2,3,4,8\}$, the fraction of floating-point nDCG@10~\citep{jarvelin2002cumulated,manning2008introduction} each encoder retains.
Appendix~\ref{sec:encoder-suite} gives every bit-width on whole abstracts (Table~\ref{tab:quality-sweep-abstracts}) and top-1 agreement at $b \in \{3,4\}$ (Table~\ref{tab:suite}).
At a high level, the results show that down to $b{=}4$, all encoders retain a high percentage of retrieval quality compared to their floating-point counterparts.

\paragraph{Appropriate quantization.}
A $b$-bit symmetric quantizer maps a chosen magnitude, the scale $\tau$, to the top level $q_{\max} = 2^{b-1}-1$ and rounds every coordinate to the nearest multiple of the step $\tau/q_{\max}$, so every coordinate within $[-\tau, \tau]$ is off by at most half a step.
Max-abs scaling, the usual choice, sets the scale to the largest coordinate of the corpus matrix.
BERT-family encoders carry one coordinate holding over $5\%$ of the squared norm of every embedding, and that one coordinate makes the step up to $4\times$ coarser than the other $1023$ need.
Appendix~\ref{sec:encoder-suite} shows the effect: BGE-large clipped at $b{=}3$ retains $97$ and $99\%$ of floating-point nDCG@10 on SciFact and NFCorpus, where max-abs retains $77$ and $73\%$ at $b{=}3$ and only reaches $97$ and $96\%$ at $b{=}4$.
We thus use clipping quantization for our experiments.

\paragraph{Quality at scale.}
A larger index gives every query more near-misses to lose to~\citep{reimers2021curse}, so we repeat the sweep on Natural Questions from $10^4$ passages to the full $2.68$M with five encoders, one of them trained for low-bit readout (Table~\ref{tab:scale-recall5}, Appendix~\ref{sec:encoder-suite}).
At $b{=}4$ four of the five retain at least $98.8\%$ of floating-point Recall@5 at every size, and e5 retains $97.3\%$ at $2.68$M.
At $b{=}3$ arctic, BGE and mxbai retain at least $98.4\%$ while gte and e5 fall to $94.5$ and $92.0\%$ at $2.68$M.
At $b{=}2$ the loss grows with the corpus, from $1$ to $4\%$ at $10^4$ to $10$ to $23\%$ at $2.68$M.
The lead of the quantization-aware encoder grows with the corpus: arctic retains $0.8$ to $2.7$ points more Recall@5 than the conventional four at $10^4$ and $2.1$ to $12.8$ points more at $2.68$M.
As such, retrieval quality places the operating point at $b{=}3$.
There, five of the six encoders lose $1$ to $3\%$ of nDCG@10 (e5 loses $6\%$ on NFCorpus), and for arctic, BGE and mxbai the loss stays small as the corpus grows.

\subsection{Quantization robustness lowers cryptographic cost}
\label{subsec:correctness}
In this section we formalize how the bit-width $b$ sets the cryptographic parameters, and thus the cost, of Round~1.
For both rounds, we fix the ciphertext modulus $q = 2^{64} - 2^{32} + 1$, at which the primitive of \citet{cryptoeprint:2026/027} is most efficient, and target $\lambda \geq 128$ bits of security and a decryption failure probability of $2^{-64}$ per coefficient.

\begin{theorem}[Correctness of Round~1]
\label{thm:correctness}
Let $\mat{A} \in \ZZ^{m \times N}$ and $\vect{x} \in \ZZ^N$ have entries of absolute value less than $B$, let $p \geq 2NB^2$ and let $\vect{c} \gets \enc(\sk, \vect{x})$ be a GLWE encryption with ciphertext modulus $q$, plaintext modulus $p$ and Gaussian noise of standard deviation $\sigma$.
For every $r > 0$, provided $q/\sigma \geq 2r\sqrt{N}\,B\,p$, each coordinate of $\dec(\sk, \mat{A} \cdot \vect{c})$ equals that of $\mat{A} \cdot \vect{x}$ over the integers, except with probability $\varepsilon(r) \coloneq 2\exp(-r^2/2)/(r\sqrt{2\pi})$.
\end{theorem}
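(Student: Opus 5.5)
I will write out the GLWE ciphertext structure explicitly and track the noise through the linear map $\vect{c}\mapsto \mat{A}\cdot\vect{c}$. Let $\Delta = \lfloor q/p\rceil$. Each coordinate $c_j$ of $\vect{c}$ (an encryption of $x_j$) satisfies a decryption relation of the form $\langle c_j, (1,\sk)\rangle = \Delta x_j + e_j \pmod q$, where $e_j$ is Gaussian with standard deviation $\sigma$ and the $e_j$ are independent across $j$, since each encryption uses fresh randomness. The relation is linear in the ciphertext, so for row $i$ of $\mat{A}$ the ciphertext $\sum_j A_{ij} c_j$ decrypts against the phase
\begin{equation*}
\Delta \sum_j A_{ij}x_j + \sum_j A_{ij} e_j \pmod q .
\end{equation*}
I therefore split the proof into two conditions: a message-side condition and a noise-side condition.

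\textbf{Message side.} Because $|A_{ij}|,|x_j|<B$, we have $|\sum_j A_{ij}x_j| < NB^2 \le p/2$. The hypothesis $p\ge 2NB^2$ therefore lets the representative of the inner product in the centered interval $(-p/2,p/2]$ equal its integer value, with no wraparound modulo $p$. I also account for the rounding of $\Delta$. If $p\nmid q$, the term $\Delta\cdot p \ne q$ contributes an extra error of roughly $|\sum_j A_{ij}x_j|\cdot|q/p-\Delta| \le NB^2/2$. This term is negligible against $q/(2p)$ in the regime of interest, so I will either absorb it into the slack or assume the standard convention that removes it. I will say explicitly which choice is made.

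\textbf{Noise side.} The decrypted coordinate is correct provided $|E_i| < q/(2p)$, where $E_i = \sum_j A_{ij}e_j$. Since $E_i$ is a linear combination of independent Gaussians, it is itself Gaussian with standard deviation
\begin{equation*}
\sigma_i = \sigma\|\mat{A}_i\|_2 \le \sigma\sqrt{N}\,B .
\end{equation*}
The hypothesis $q/\sigma \ge 2r\sqrt N B p$ gives $q/(2p) \ge r\sigma\sqrt N B \ge r\sigma_i$. Hence
\begin{equation*}
\Pr\bigl[|E_i| \ge q/(2p)\bigr] \le \Pr\bigl[|Z|\ge r\bigr], \qquad Z\sim\mathcal N(0,1).
\end{equation*}
The standard Mills-ratio bound $\Pr[Z\ge r]\le \phi(r)/r$ then yields exactly $2e^{-r^2/2}/(r\sqrt{2\pi}) = \varepsilon(r)$.

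\textbf{Main obstacle.} The delicate step is justifying that the noise really is a continuous Gaussian, with independent components, of variance $\sigma^2$ in each coefficient. I will treat the noise as continuous (or near-continuous) Gaussian, which is the modeling assumption the paper uses when it states $\sigma$. If the GLWE encoding packs the vector into polynomial coefficients rather than into separate ciphertexts, the product $\mat{A}\cdot\vect{c}$ in the primitive of \citet{cryptoeprint:2026/027} may mix coefficients through negacyclic rotations. In that case each output coefficient is still a signed combination of at most $N$ independent noise coefficients with weights $|A_{ij}|<B$. The same variance bound $\sigma^2NB^2$ therefore holds, and I would state the argument in that form so that it covers both encodings.
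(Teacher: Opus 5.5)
Your proposal is correct and takes essentially the same route as the paper. Both arguments first note that $|s_i| < NB^2 \le p/2$, so there is no wraparound modulo $p$. Both then write the noise of each output coefficient as a signed combination $\sum_j \pm A_{ij} e_{\pi(j)}$ of independent Gaussians, of variance at most $NB^2\sigma^2$, and conclude with the two-sided Gaussian tail bound $\varepsilon(r)$ together with $r\sqrt{N}B\sigma \le q/(2p)$. Like you, the paper neglects the rounding of $\Delta = \lfloor q/p \rceil$, treats the noise as continuous, and handles the negacyclic packed encoding exactly as in your last paragraph.
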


We defer the proof to Appendix~\ref{sec:proof-correctness}.
In Round~1 the inner dimension $N$ is the embedding dimension $d = 2^{10}$, the quantized entries lie in $[-(2^{b-1}-1), 2^{b-1}-1]$ so $B = 2^{b-1}$, and we take $p = 2NB^2 = 2^{2b+9}$.
At $q = 2^{64} - 2^{32} + 1$ and $r = 9.16$, for which $\varepsilon(r) \leq 2^{-64}$, Theorem~\ref{thm:correctness} thus makes Round~1 correct provided $\sigma \leq \sigma_{\max}(b)$ where
\begin{equation}
\label{eq:exchange}
\log_2 \sigma_{\max}(b) \;=\; \log_2 q - \log_2(2r) - \tfrac{1}{2}\log_2 N - (b-1) - (2b+9) \;\approx\; 46.8 - 3b .
\end{equation}
Every added bit of quantization thus costs three bits of noise budget, one through the entry bound $B$ and two through the plaintext modulus $p$.
Security pulls in the other direction and calls for more noise.
It admits no closed form, so we evaluate it with the lattice estimator~\citep{JMC:APS15} at $\sigma = \sigma_{\max}(b)$, for a ternary secret and GLWE dimensions $2^{10}$ and $2^{11}$ (Table~\ref{tab:exchange}).
Dimension $2^{10}$, which matches the embedding dimension, reaches $128$ bits at $b = 2$ only; every $b \geq 3$ must double the GLWE dimension $\kappa d$, by taking $d = 2^{11}$ or rank $\kappa = 2$.
As such, the cryptography admits two regimes rather than a cost per bit: the smaller ring at $b{=}2$, and one doubling, paid once, from $b{=}3$ on.
We measure what this doubling costs in \S\ref{subsec:costs}.

\begin{table}[t]
\centering
\caption{Quantization-cryptography exchange rate: Round~1 parameters at $q = 2^{64} - 2^{32} + 1$, $\kappa = 1$, ternary secret and $2^{-64}$ failure probability. $\lambda_{10}$ and $\lambda_{11}$ are the lattice-estimator security at $\sigma = \sigma_{\max}(b)$ for LWE dimensions $d = 2^{10}$ and $2^{11}$, and $d$ the smallest dimension at $128$ bits.}
\label{tab:exchange}
\small
\begin{tabular}{@{}c cc cc c@{}}
\toprule
$b$ & $p$ & $\sigma_{\max}$ & $\lambda_{10}$ & $\lambda_{11}$ & $d$ \\
\midrule
$\mathbf{2}$ & $2^{13}$ & $2^{40.8}$ & $\mathbf{138}$ & $284$ & $\mathbf{2^{10}}$ \\
$3$ & $2^{15}$ & $2^{37.8}$ & $122$ & $253$ & $2^{11}$ \\
$4$ & $2^{17}$ & $2^{34.8}$ & $110$ & $227$ & $2^{11}$ \\
$5$ & $2^{19}$ & $2^{31.8}$ & $100$ & $206$ & $2^{11}$ \\
$8$ & $2^{25}$ & $2^{22.8}$ & $79$ & $161$ & $2^{11}$ \\
\bottomrule
\end{tabular}
\vspace{-0.5em}
\end{table}

\subsection{Concrete costs of private dense retrieval}
\label{subsec:costs}
In what follows we quantify the cost of a private and verifiable query, isolate the share of that price paid for malicious security and identify the corpus sizes and workflows the protocol can serve today.

\paragraph{Measured cost.}
Table~\ref{tab:benchmarks} reports the cost of the full protocol on the evaluation corpora and at scale, at both parameter sets of \S\ref{subsec:correctness}.
At $b{=}2$ the server time is of $1.7$\,s on SciFact and on NFCorpus, and of $2.5$\,s from $b{=}3$ on, once the ring is doubled.
At $10^6$ chunks the two are of $3.0$ and $5.4$ minutes.
Round~2 multiplies $\mat{D}^\top \in \ZZ_p^{512 \times m}$ where Round~1 multiplies $\mat{A} \in \ZZ_p^{m \times 1{,}024}$.
Both matrices grow linearly in $m$ and the prover's cost follows this trend.
As such, Round~2 costs $\approx 0.5\times$ Round~1 at every corpus size at $b{=}2$.
Per round, the client uploads $\lceil N/d \rceil (\kappa{+}1) d \log_2 q$ bits of ciphertext and downloads $\lceil m'/d \rceil (\kappa{+}1) d \log_2 q$ bits plus the proof, where $m'$ is the output length of that round.
Over both rounds, communication is of $528$~KiB on SciFact and $496$~KiB on NFCorpus at $b{=}2$, proofs included, and $20$~KiB more once the ring is doubled.
Most of the server time is spent computing the proof of correct computation.
On the SciFact Round-1 shape at $d = 2^{10}$, the matrix-vector product over ciphertexts takes $15$\,ms without proof generation and $1.18$\,s with it.
Detecting a malicious server rather than trusting it thus costs a factor of $77$.

\begin{table}[h]
\centering
\caption{Measured cost on the evaluation corpora and at scale, at $b{=}2$ (Round~1 at $d = 2^{10}$) and from $b{=}3$ on (Round~1 at $d = 2^{11}$); Round~2 runs at $d = 2^{12}$ with $\ell = 512$ (Table~\ref{tab:glwe-params}). Server time is measured on $48$ AMD EPYC 9654 cores; client time is measured on a single core. Prices at the on-demand rate of a $48$-core \texttt{hpc7a.24xlarge} EC2 instance (\$$7.20$/h).}
\label{tab:benchmarks}
\small
\setlength{\tabcolsep}{4pt}
\begin{tabular}{@{}lccccc@{}}
\toprule
& & & \multicolumn{3}{c}{$b{=}2$ / $b{\geq}3$} \\
\cmidrule(lr){4-6}
Corpus & Round 1: $\mat{A}$ & Round 2: $\mat{D}^\top$ & Client & Server & \$/query \\
\midrule
NFCorpus ($5{,}761$ chunks) & $5{,}761 \times 1{,}024$ & $512 \times 5{,}761$ & $510$\,ms / $914$\,ms & $1.7$\,s / $2.5$\,s & $0.003$ / $0.005$ \\
SciFact ($7{,}510$ chunks) & $7{,}510 \times 1{,}024$ & $512 \times 7{,}510$ & $510$\,ms / $914$\,ms & $1.7$\,s / $2.5$\,s & $0.003$ / $0.005$ \\
$m = 10^4$ & $10^4 \times 1{,}024$ & $512 \times 10^4$ & $907$\,ms / $1.7$\,s & $3.0$\,s / $5.0$\,s & $0.006$ / $0.010$ \\
$m = 10^5$ & $10^5 \times 1{,}024$ & $512 \times 10^5$ & $6.3$\,s / $12.5$\,s & $21.7$\,s / $33.8$\,s & $0.043$ / $0.068$ \\
$m = 10^6$ & $10^6 \times 1{,}024$ & $512 \times 10^6$ & $50.4$\,s / $1.7$\,min & $3.0$\,min / $5.4$\,min & $0.364$ / $0.650$ \\
\bottomrule
\end{tabular}
\end{table}

\paragraph{The deployable frontier.}
\Cref{tab:benchmarks} presents the cost of a private dense retrieval query under our protocol for both $b \in \{2,3\}$.
By \S\ref{subsec:retrieval}, $b{=}3$ is the smallest bit-width that keeps retrieval quality, so the following assumes $b{=}3$, i.e. the larger GLWE parameters.
Today, UpToDate, the curated reference clinicians consult first, holds over $12{,}000$ topics~\citep{uptodate2026topics}, each the length of a review article, i.e., of the order of $2$ to $5 \times 10^5$ chunks in all.
A private query on a corpus this size costs one to three minutes and is affordable today.
On the other hand, PubMed, the literature behind UpToDate, holds over $40$ million abstracts~\citep{pubmed2026}, one or two chunks each.
A query on such a large corpus would take $4$ to $6$ hours of server time at this throughput and incur communication costs of approximately $1$~GB.
We judge this is too costly for the clinical setting, even in the deferred-response context.
Two things can bring the cost down: faster cryptographic machinery, and encoders that keep their quality at $b{=}2$.
The former lies with the cryptographic community.
The latter is a matter of representation learning, and by \S\ref{subsec:correctness} it is worth the one ring doubling, i.e. the factor of $1.5$ to $1.8$ between the two columns of Table~\ref{tab:benchmarks}.
As such, at $b{=}3$ the protocol serves a corpus of a few hundred thousand chunks in one to three minutes and $10^6$ in five, which covers a curated clinical reference and not the literature behind it.

\subsection{Downstream task quality under quantization}
\label{subsec:downstream}

In the RAG deployments we target, what matters is the accuracy of the answer a language model produces from the retrieved passages rather than the relevance ranking of these passages.
The two need not degrade together, since a reader consuming $k$ passages tolerates a reordering provided the evidence it needs appears somewhere in its context~\citep{salemi2024erag,cuconasu2024noise}.
We measure downstream accuracy as a function of $b$ and locate the smallest bit-width it tolerates.

\paragraph{Methodology.}
We follow the standard end-task protocol for retrieval-augmented generation~\citep{karpukhin2020dense,lewis2020retrieval,petroni2021kilt}: the retriever returns the top-$k$ passages, a fixed reader answers from them, and only the retrieved set changes between the conditions we compare.
A question-answering response is correct if it contains a reference answer~\citep{mallen2023popqa,asai2024selfrag}, and a claim-verification response if it matches the label.
We test each bit-width against floating point with an exact McNemar test on paired per-query outcomes.
Since a non-significant difference is not evidence of equivalence, we further test equivalence within $\pm 5$ points by two one-sided tests (TOST) on the same paired differences (Appendix~\ref{sec:prompts}).

\paragraph{Setup.}
We use three endpoints and two encoders, one from each group of \S\ref{subsec:retrieval}.
\emph{SciFact} is evaluated on its original task: the reader receives a claim and the top-$k$ retrieved abstracts and outputs \textsc{Support}, \textsc{Contradict} or \textsc{NoInfo}, scored against the human labels on the $188$ claims with evidence in the corpus, at $k{=}5$ and $k{=}1$.
\emph{BioASQ} is biomedical factoid QA over a corpus of $35{,}454$ passages, scored by containment of a reference answer.
\emph{Natural Questions} is the out-of-domain control at scale, answered over the full $2.68$M-passage corpus of \S\ref{subsec:retrieval}.
Each endpoint has $300$ queries.
We test the performance limits of the reader by measuring its performance in the closed book setting, where it answers from its own knowledge with no passages, and in the gold context setting, where it receives the labelled relevant passages in place of the retrieved ones.
Retrieval uses the clipped quantizer with BGE and arctic, and only the retrieved set varies with $b$, so any change in accuracy is attributable to quantization.
The reader is \texttt{phi-4} ($14$B)~\citep{abdin2024phi4}.
Appendix~\ref{sec:prompts} repeats every measurement with \texttt{OLMo-2-7B}~\citep{olmo2025olmo2} and \texttt{Qwen2.5} at $7$B and $72$B~\citep{qwen2025qwen25}, and gives the prompts and corpora.

\paragraph{Downstream quality holds to $b{=}3$ and not to $b{=}2$.}
Table~\ref{tab:downstream} shows the result.
On closed book, the reader reaches $19\%$ on BioASQ, $31\%$ on NQ and $59\%$ on SciFact, with gold passages $68$, $74$ and $81\%$, and with floating-point retrieval it sits between the two, close to the gold row.
The effects of quantization below are to be read against that span of $23$ to $49$ points.
No cell shows a significant drop from floating point at $b{=}4$ or $b{=}3$, and every cell is equivalent to floating point within $5$ points at both bit-widths.
This holds at $2.68$M passages.
The reason is that the RAG endpoint consumes recall rather than rank.
More precisely, at $b{=}3$ the top-$5$ set of BGE on NQ agrees with the floating-point one on $82\%$ of positions, yet it contains a relevant passage for $70\%$ of questions, as in floating point.
Further, \S\ref{subsec:retrieval} showed Recall@5 retained at every corpus size.
The three other readers agree: over all four, $32$ of $32$ reader-endpoint-encoder combinations are equivalent to floating point at $b{=}4$ and $30$ of $32$ at $b{=}3$, with a single significant drop of $3.3$ points (Table~\ref{tab:downstream-readers}).

At $b{=}2$, accuracy drops significantly in $20$ of $32$ combinations, by $3$ to $9$ points, and only $9$ remain equivalent to floating point within $5$ points.
Again, poor recall is responsible for this drop in performance.
On NQ, the top-$5$ set of BGE contains a relevant passage for $60\%$ of questions at $b{=}2$ against $70\%$ at $b{=}3$, and the reader answers \textsc{Unknown} more often rather than answering wrongly.
The loss grows with the corpus: Recall@5 retained at $b{=}2$ falls from $98$ to $86\%$ for BGE and from $99$ to $90\%$ for arctic between $10^4$ and $2.68$M passages (Table~\ref{tab:scale-recall5}).
Interestingly, the loss is uneven.
It is largest at $k{=}1$, where a single reordering at the top removes the only passage the reader sees, at $4$ to $9$ points for both encoders.
At $k{=}5$ it depends on the encoder: on SciFact, BGE loses $3$ to $6$ points where arctic loses at most $2.1$, with no significant drop for any of the four readers.
This shows that the loss at $b{=}2$ is a property of how the encoder was trained.

\begin{table}[t]
\centering
\caption{Downstream accuracy versus bit-width with the clipped quantizer (reader \texttt{phi-4}, $n{=}300$ queries per endpoint). SciFact: accuracy on the $188$ evidence-bearing claims; BioASQ and NQ: answer containment, NQ retrieving over the full $2.68$M-passage corpus. Closed book: the reader sees no passages; gold context: it sees the labelled relevant passages (at most $k$). \textbf{Bold} marks a significant drop from floating point (exact McNemar, $p<0.05$). Other readers: Appendix~\ref{sec:prompts}.}
\label{tab:downstream}
\small
\setlength{\tabcolsep}{4pt}
\begin{tabular}{@{}l cc cc cc cc@{}}
\toprule
& \multicolumn{2}{c}{SciFact $k{=}5$} & \multicolumn{2}{c}{SciFact $k{=}1$} & \multicolumn{2}{c}{BioASQ $k{=}5$} & \multicolumn{2}{c}{NQ $k{=}5$, $2.68$M} \\
\cmidrule(lr){2-3}\cmidrule(lr){4-5}\cmidrule(lr){6-7}\cmidrule(lr){8-9}
$b$ & BGE & arctic & BGE & arctic & BGE & arctic & BGE & arctic \\
closed book & \multicolumn{2}{c}{$0.585$} & \multicolumn{2}{c}{$0.585$} & \multicolumn{2}{c}{$0.187$} & \multicolumn{2}{c}{$0.313$} \\
gold context & \multicolumn{2}{c}{$0.814$} & \multicolumn{2}{c}{$0.809$} & \multicolumn{2}{c}{$0.680$} & \multicolumn{2}{c}{$0.740$} \\
\midrule
float & $0.734$ & $0.713$ & $0.766$ & $0.739$ & $0.633$ & $0.647$ & $0.643$ & $0.643$ \\
$8$ & $0.734$ & $0.718$ & $0.771$ & $0.739$ & $0.633$ & $0.643$ & $0.657$ & $0.637$ \\
$6$ & $0.729$ & $0.713$ & $0.771$ & $0.739$ & $0.633$ & $0.640$ & $0.650$ & $0.640$ \\
$4$ & $0.723$ & $0.718$ & $0.771$ & $0.739$ & $0.637$ & $0.637$ & $0.650$ & $0.627$ \\
$3$ & $0.707$ & $0.713$ & $0.761$ & $0.755$ & $0.643$ & $0.640$ & $0.623$ & $0.640$ \\
$2$ & $\mathbf{0.676}$ & $0.697$ & $\mathbf{0.686}$ & $\mathbf{0.681}$ & $0.613$ & $\mathbf{0.607}$ & $\mathbf{0.573}$ & $0.600$ \\
\bottomrule
\end{tabular}
\end{table}

Chunking the SciFact corpus as in \S\ref{sec:experiments} costs the conventional encoder one bit and the quantization-aware one none (Table~\ref{tab:downstream-chunks}, Appendix~\ref{sec:prompts}).
With BGE at $k{=}5$ the loss reaches $4.3$ points at $b{=}3$, where arctic stays within $5$ points of floating point down to $b{=}2$ on this smaller corpus.

As such, downstream accuracy places the operating point of private RAG at $b{=}3$ at every corpus size we measured, one bit above the $b{=}2$ at which \S\ref{subsec:correctness} halves the GLWE parameters required for sufficient security.
That frontier bit is decided by the encoder: with a conventional encoder the loss at $b{=}2$ is of $3$ to $9$ points, with a quantization-aware one it is within $2.1$ points at $k{=}5$ on SciFact.
Whether private RAG runs on the smaller GLWE parameters is thus a property of the encoder.

\section{Discussion}
\label{sec:discussion}

\paragraph{Limitations.}
The protocol is a drop-in retrieval layer for deployments in which a licensed corpus is hosted by an untrusted service, the queries are sensitive and an answer is expected in minutes.
Importantly, our contribution is not an end-to-end privacy-preserving RAG pipeline: we assume the embedding model and the readers run locally or in a trusted environment, and leave a fully private pipeline, which current technology does not yet allow, to future work.
In terms of scale, three constraints bound our protocol.
The first is the client: the Round-1 response, the Round-2 query and the client's decryption and argmax are all linear in $m$.
At $m = 10^6$, Table~\ref{tab:benchmarks} gives one to two minutes of sequential client work and roughly $16$~MB in each direction.
The client is kept single-threaded to model a computationally weak device.
The second is the encoder, which decides whether a task can go below $b{=}3$.
Our evidence is a controlled comparison of six encoders, of which only two were trained for low-bit readout.
What they change is the score margin relative to the rounding step (\S\ref{subsec:retrieval}); whether a training objective can widen it enough for $b{=}2$ at scale is an open problem we hand to representation learning.
Very recent advances in this regard are promising~\citep{3814246.3814371}.
The third constraint is the cryptographic primitive: verifiable HE throughput limits web-scale deployment, but our modular design directly absorbs any improvement to it.

\section{Conclusion}
\label{sec:conclusion}

We have formulated private dense retrieval as a problem of privacy and integrity against a malicious server.
We have reduced the underlying computations to verifiable multiplication of a committed matrix by an encrypted vector and used the resulting protocol to measure what privacy costs.
With a clipped quantizer, retrieval quality and downstream accuracy hold at $b{=}3$ for six encoders and four readers on corpora of up to $2.68$M passages.
A private, verified query over a corpus the size of a clinical reference is then of one to three minutes.
At $128$ bits of security, only $b{=}2$ admits the smaller GLWE parameters; $b \geq 3$ costs a factor of $1.8$ in server time.
Which encoders keep their quality below three bits is set by their score margin relative to the quantization step, and only those trained for low-bit readout do so at $b{=}2$.
Lowering the cost of privacy in dense retrieval is thus not only a cryptography problem, but also one of representation learning. 

\section*{AI Use Statement}
In this work, we used generative AI tools for feedback on research methodology. We have not used generative AI tools to generate synthetic data sets, help develop theoretical models or conceptual frameworks, formulate mathematical claims, provide critical ingredients for proving mathematical claims, assist in the writing of proofs, propose or refine hypotheses, implement methods, assist with translation, clean and reformat datasets, support qualitative and thematic data analysis, or interpret results. 
Additionally, we used generative AI tools to edit software code, generate tables and identify relevant literature. We have reviewed all AI-assisted work. We take responsibility for the final content of this work, including text, claims or artifacts produced with the aid of generative AI.

\section*{Reproducibility Statement}
All retrieval-side results use public corpora (SciFact, NFCorpus, Natural Questions, BioASQ) and publicly available encoders.
The downstream readers are open-weight models run with structured decoding, so every number in \S\ref{subsec:downstream} is reproducible without access to an API.
The quantization code and the per-bit-width sweep outputs are available at \url{\pdrartifacturl}, and the protocol implementation at \url{\vpirurl}.
Prompts are fixed across bit-widths within a reader and are given in Appendix~\ref{sec:prompts}.
The cryptographic parameters, the correctness constraint and the security estimate are stated in closed form in \S\ref{subsec:correctness} and Appendix~\ref{sec:glwe}, so Table~\ref{tab:exchange} is reproducible without running the protocol.
Timings are single-node measurements on a named processor with a fixed thread count, and the benchmark binary asserts verification and exact decryption at every shape.

\section*{Ethics Statement}
This work aims to reduce the exposure of sensitive queries and is motivated by a concrete disclosure risk in clinical decision support.
The corpora we use are public research datasets and contain no patient data.
We emphasize that query privacy against the retrieval provider does not guarantee privacy over the rest of the technological stack.
A protocol such as ours could be invoked to argue compliance while leakage persists elsewhere.
The guarantee is exactly that of Definition~\ref{def:pdr} and no broader.
Further, the integrity guarantee is relative to a committed digest.
Deploying the protocol without independent attestation of that digest by peers or auditors offers the appearance of verification without its substance.

\section*{Acknowledgments}
The authors would like to thank Léo Gagnon, Romane Asselin and Charlie Gauthier for insightful discussions.
This work was supported by the Fonds de recherche du Québec - Nature et technologies (FRQNT) through grant \url{https://doi.org/10.69777/2006324}.


\bibliographystyle{iclr2027_conference}
\bibliography{local,crypto,abbrev3}

\appendix
\section{Preliminaries}
\label{sec:prelim}

\subsection{Homomorphic Encryption}

We rely on a linearly homomorphic encryption scheme based on the General Learning With Errors (GLWE) assumption.
\subsection{(G)LWE encryption}
\label{sec:glwe}
\begin{definition}[GLWE encryption scheme~\cite{DBLP:journals/jacm/LyubashevskyPR13}]
\label{def:glwe}
Let $q$ be a prime ciphertext modulus, $p < q$ a plaintext modulus, $\kappa$ the GLWE dimension parameter and $d$ a power of two ring degree so that $\R_q \coloneq \ZZ[X]/\langle X^d + 1 \rangle$ is a cyclotomic polynomial ring.
We let $\chi_s$ and $\chi_e$ denote the secret key and error distributions over $\R_q$ respectively. Finally, we let $\calU$ denote the uniform distribution, $\R_p$ denote the message space of the encryption scheme and $\Delta \coloneq \lfloor q/p \rceil$.
A GLWE encryption scheme $\calE \coloneq (\kgen, \enc, \dec)$ consists of the following algorithms:
\begin{itemize}
    \item $\kgen(\secparam)$: Takes as input a security parameter $\secparam$ and returns a sampled  secret key $\vec s \sample \chi_s^\kappa$.
    \item $\enc(\vec s, m)$: Takes as input a secret key $\vec s$ and a message $m \in \R_p$ and returns a ciphertext $\vec c \in \R_q^{\kappa+1}$ computed as follows:
    \[
        \vec a \sample \calU(\R_q)^\kappa, \quad e \sample \chi_e, \quad \vec c = (\vec a, \innerp{\vec a}{\vec s} + e + \Delta m)  \in \R_q^{\kappa + 1}.
    \]
    \item $\dec(\vec s, \vec c)$: Takes as input a secret key $\vec s$ and a ciphertext $\vec c = (\vec a, b) \in \R_q^{\kappa + 1}$ and returns a message $m \in \R_p$ computed as follows:
    \[
        m = \left\lfloor \frac{b - \innerp{\vec a}{\vec s}}{\Delta} \right\rceil
    \]
\end{itemize}
This encryption scheme is secure under the GLWE assumption for appropriate choices of parameters.
For a concrete parameter set, its security can be estimated using the widely used lattice estimator~\cite{JMC:APS15}.
\end{definition}

Table~\ref{tab:glwe-params} gives the parameter sets we instantiate the protocol with, satisfying the correctness constraint of Theorem~\ref{thm:correctness} with failure probability at most $2^{-64}$ per coefficient and at least $\lambda = 128$ bits of security for a ternary secret, as estimated with the lattice estimator~\cite{JMC:APS15}.
Round~1 is shown at $b = 2$, the only bit-width the ring $d = 2^{10}$ admits, and at $b = 3$ and $4$ with $d = 2^{11}$. 
The remaining rows of Table~\ref{tab:exchange} follow the same pattern (rank $\kappa = 2$ at $d = 2^{10}$ has the same LWE dimension and the same security, but the implementation we measure supports rank $1$ only).
In each row $\sigma$ is set at the correctness ceiling of \eqref{eq:exchange}, so the security shown is the largest the row allows.
Round~2 encrypts a one-hot vector over $m$ chunks with $B = p = 2^{24}$, so its ceiling is $\log_2 \sigma_{\max} = 64 - \log_2(2r) - \tfrac{1}{2}\log_2 m - 48$, i.e., $2^{5.3}$ at $m = 2^{13}$ and $2^{1.8}$ at $m = 2^{20}$.
We use the minimal width $\sigma = 3.2$, which is correct up to $m = 2^{20}$ chunks and gives $223$ bits at $d = 2^{12}$, the ring at which $128$ bits are reached at that width; beyond $2^{20}$ chunks, each bit removed from $p$ buys four doublings of $m$.

\begin{table}[ht]
\centering
\caption{GLWE parameters per round: ring dimension $d$, rank $\kappa$, ciphertext modulus $q$, plaintext modulus $p$, noise standard deviation $\sigma$ and the resulting security parameter $\lambda$ (ternary secret) and maximum failure probability $\Pr[\mathsf{fail}]$ per coefficient. Round~2 is given for $m \leq 2^{20}$ chunks.}
\label{tab:glwe-params}
\small
\begin{tabular}{@{}lccccccc@{}}
\toprule
Round & $d$ & $\kappa$ & $q$ & $p$ & $\sigma$ & $\lambda$ & $\Pr[\mathsf{fail}]$ \\
\midrule
Round 1 ($b=2$) & $2^{10}$ & $1$ & $2^{64} - 2^{32} + 1$ & $2^{13}$ & $2^{40.8}$ & $138$ & $2^{-64}$ \\
Round 1 ($b=3$) & $2^{11}$ & $1$ & $2^{64} - 2^{32} + 1$ & $2^{15}$ & $2^{37.8}$ & $253$ & $2^{-64}$ \\
Round 1 ($b=4$) & $2^{11}$ & $1$ & $2^{64} - 2^{32} + 1$ & $2^{17}$ & $2^{34.8}$ & $227$ & $2^{-64}$ \\
Round 2 & $2^{12}$ & $1$ & $2^{64} - 2^{32} + 1$ & $2^{24}$ & $2^{1.7}$ & $223$ & $2^{-64}$ \\
\bottomrule
\end{tabular}
\end{table}

The above encryption scheme is additively homomorphic and allows for multiplication by plaintext elements,
provided the noise remains ``small''.
More precisely,
given two ciphertexts $\vec c_1 = (\vec a_1, b_1)$ and $\vec c_2 = (\vec a_2, b_2)$ encrypting messages $m_1, m_2 \in \R_p$ respectively,
we have:
\begin{itemize}
    \item Addition: $\vec c_{\text{add}} = \vec c_1 + \vec c_2 = (\vec a_1 + \vec a_2, b_1 + b_2)$ is a ciphertext encrypting $m_{\text{add}} = m_1 + m_2$.
    \item Scalar multiplication: For any plaintext $y \in \R_p$, $\vec c_{\text{scal}} = y \cdot \vec c_1 = (y \cdot \vec a_1, y \cdot b_1)$ is a ciphertext encrypting $m_{\text{scal}} = y \cdot m_1$.
\end{itemize}
These properties hold provided the accumulated noise in the resulting ciphertexts remains small enough to allow for correct decryption,
i.e. $\inftynorm{e} \leq \Delta/2$.

\subsection{Proof of Theorem~\ref{thm:correctness}}
\label{sec:proof-correctness}
\begin{proof}
Each score $s_i = \sum_j A_{ij} x_j$ satisfies $|s_i| \leq N(B-1)^2 < NB^2$, so the $2NB^2 - 1 \leq p$ values it can take are distinct modulo $p$.
It remains to show that decryption returns $s_i \bmod p$.
By linearity, $\mat{A} \cdot \vect{c}$ decrypts to $\mat{A} \cdot \vect{x} \bmod p$ provided the noise of every coefficient stays below $\Delta/2$ in absolute value.
We neglect the rounding of $\Delta = \lfloor q/p \rceil$, which shifts this threshold by at most $p/2$, and write $\Delta/2 = q/2p$.
The ciphertext $\vect{c}$ carries $N$ independent noise coefficients $e_1, \dots, e_N \sim \mathcal{N}(0, \sigma^2)$, and the coefficient of $\mat{A} \cdot \vect{c}$ carrying $s_i$ has noise $e'_i = \sum_j \pm A_{ij}\, e_{\pi(j)}$, for a permutation $\pi$ and signs fixed by the negacyclic ring structure.
As such, $e'_i$ is Gaussian of variance $\sigma^2 \sum_j A_{ij}^2 \leq NB^2\sigma^2$, and the Gaussian tail bound gives $\Pr[|e'_i| > r\sqrt{N}B\sigma] \leq \varepsilon(r)$.
The inequality $q/\sigma \geq 2r\sqrt{N}Bp$ implies that $r\sqrt{N}B\sigma \leq q/(2p)$, so every coordinate is correct except with probability $\varepsilon(r)$.
\end{proof}

\subsection{Secure Matrix-Vector Multiplication}
\label{sec:secure-matvec}

We assume the existence of an efficient protocol for verifiable plaintext matrix--encrypted vector multiplication.
Such a primitive can be instantiated with e.g.~\cite{cryptoeprint:2026/027}.
The protocol operates as follows:

\begin{enumerate}
    \item The client encodes its input $\vect{x} = (x_1, \ldots, x_N) \in \ZZ_p^N$ into ring elements and encrypts each component, obtaining a ciphertext vector $\vect{c}$ which is sent to the server.
    \item The server holds a plaintext matrix $\mat{M} \in \ZZ_p^{m \times N}$. It computes the matrix-ciphertext product row-wise, yielding a result ciphertext vector $\vect{c}'$ of dimension $m$, which is returned to the client.
    \item The client decrypts $\vect{c}'$ to obtain $\mat{M} \cdot \vect{x}$.
\end{enumerate}

\begin{definition}[Passive security]
The protocol is \emph{passively secure} if the server's view (the ciphertext vector $\vect{c}$) reveals no information about $\vect{x}$ beyond public parameters and dimensions.
\end{definition}

\begin{definition}[Malicious security / Verifiability]
The protocol is \emph{maliciously secure} if, in addition to passive security, the client can verify that the server computed $\mat{M} \cdot \vect{x}$ correctly. That is, a malicious server cannot convince the client to accept an incorrect result $\vect{y} \neq \mat{M} \cdot \vect{x}$ except with probability $\negl$.
\end{definition}

The malicious security property is achieved via a verifiable computation mechanism
which allows the client to check the server's computation without re-doing it.

\subsection{Embedding-Based Retrieval}
\label{sec:embeddings}

Modern retrieval systems represent documents and queries as dense vectors (embeddings) in $\mathbb{R}^n$
using neural network encoders (\cite{reimers2019sentence, ni2022large}).
Given a corpus of $m$ documents, each document $i$ is mapped to an embedding $\vect{a}_i \in \mathbb{R}^n$.
These embeddings are collected as rows of a matrix $\mat{A} \in \mathbb{R}^{m \times n}$.
For a query with embedding $\vect{x} \in \mathbb{R}^n$,
the relevance of each document is measured by the inner product $\langle \vect{a}_i, \vect{x} \rangle$,
and the top-$k$ documents with highest scores are retrieved.

\subsection{Quantized Embeddings}
\label{sec:quantization}

To operate over $\ZZ_p$ rather than $\mathbb{R}$ we use \emph{clipped symmetric scalar quantization}.
Let $q_{\max} = 2^{b-1}-1$ and let $\tau$ be the $99$th percentile of $|\mat{A}|$ over all entries of the embedding matrix.
Each coordinate $a$ is mapped to $\mathrm{clip}(\lfloor a\,q_{\max}/\tau \rceil, -q_{\max}, q_{\max})$, so the $1\%$ of coordinates beyond $\tau$ saturate.
A query vector is quantized the same way with its own percentile.
Every entry lies in $[-q_{\max}, q_{\max}]$.

The usual max-abs rule maps the single largest coordinate to $q_{\max}$.
BERT-family encoders carry one coordinate holding $5$ to $6\%$ of the squared norm of every embedding, so under max-abs that coordinate sets a step $4\times$ coarser than the bulk of the coordinates need.
Clipping at a percentile is standard in post-training quantization~\citep{banner2019posttraining}.
 Table~\ref{tab:quantizer} compares clipping with max-abs on every encoder.
The $99.9$th and $95$th percentiles land within a point of the $99$th on both corpora.

\paragraph{Compatibility with secure computation.}
Quantized embeddings are natively integers in $\ZZ_p$, requiring no floating-point emulation, and the inner product of quantized vectors is computed exactly by the protocol.
How closely it preserves the floating-point ranking is the subject of \S\ref{subsec:retrieval} and \S\ref{subsec:costs}.

\section{Protocol Message Flow}
\label{sec:protofig}

Figure~\ref{fig:protocol} gives the message flow of the two-round protocol of \S\ref{sec:protocol}.

\begin{figure}[ht]
\centering
\begin{tikzpicture}[
  >=Stealth,
  party/.style={rectangle, draw, thick, rounded corners=3pt,
                minimum width=2.8cm, minimum height=0.6cm,
                font=\bfseries\small},
  msg/.style={font=\footnotesize},
  comp/.style={font=\footnotesize\itshape, align=center, inner sep=1.5pt, fill=white},
  timeline/.style={thick, gray!60}
]

\node[party] (C) at (0,0) {Client};
\node[party] (S) at (7,0) {Server};

\draw[timeline] (0,-0.3) -- (0,-6.9);
\draw[timeline] (7,-0.3) -- (7,-6.9);

\node[comp] at (0,-0.85) {$\sk_1 \sample \kgen(\secparam)$\\$\vect{c}_1 \gets \enc(\sk_1,\, \vect{x})$};
\draw[->, thick] (0.05,-1.45) -- (6.95,-1.45)
  node[midway, above, msg] {$\vect{c}_1$};

\node[comp] at (7,-2.0) {$\vect{c}'_1 \gets \mat{A} \cdot \vect{c}_1$\\$\pi_1 \gets \mathsf{Prove}(\mat{A}, \vect{c}_1, \vect{c}'_1)$};
\draw[->, thick] (6.95,-2.6) -- (0.05,-2.6)
  node[midway, above, msg] {$(\vect{c}'_1,\, \pi_1)$};

\node[comp] at (0,-3.3) {$\mathsf{Verify}(\pi_1)$\\$\vect{s} \gets \dec(\sk_1, \vect{c}'_1)$\\$i^* \gets \arg\max_i s_i$};

\draw[dashed, gray] (-1.6,-4.1) -- (8.6,-4.1);
\node[font=\scriptsize\itshape, gray, fill=white, inner sep=1.5pt] at (3.5,-4.1) {local computation};

\node[comp] at (0,-4.75) {$\sk_2 \sample \kgen(\secparam)$\\$\vect{c}_2 \gets \enc(\sk_2,\, \vect{e}_{i^*})$};
\draw[->, thick] (0.05,-5.35) -- (6.95,-5.35)
  node[midway, above, msg] {$\vect{c}_2$};

\node[comp] at (7,-5.9) {$\vect{c}'_2 \gets \mat{D}^\top \cdot \vect{c}_2$\\$\pi_2 \gets \mathsf{Prove}(\mat{D}^\top, \vect{c}_2, \vect{c}'_2)$};
\draw[->, thick] (6.95,-6.5) -- (0.05,-6.5)
  node[midway, above, msg] {$(\vect{c}'_2,\, \pi_2)$};

\node[comp] at (0,-6.9) {$\mathsf{Verify}(\pi_2);\; \dec(\sk_2, \vect{c}'_2)$};

\node[font=\small\bfseries, rotate=90] at (-1.3,-2.15) {Round 1};
\node[font=\small\bfseries, rotate=90] at (-1.3,-5.7) {Round 2};

\end{tikzpicture}
\caption{The private dense retrieval protocol. Round~1 retrieves encrypted similarity scores, Round~2 privately fetches the top document; both use the same verifiable matrix-vector sub-protocol, and the client verifies each proof before decrypting.}
\label{fig:protocol}
\end{figure}

\section{Security of the Protocol}
\label{sec:proof}

We sketch the proof of Theorem~\ref{thm:security}.

\paragraph{Client privacy.}
In each round the server observes only fresh ciphertexts, which by semantic security of the GLWE scheme are computationally indistinguishable from encryptions of any other vector of the same dimension.
The two rounds use independent keys, so there is no cross-round linkage: the server's view in the two rounds is simulatable from public parameters alone.
Because the client verifies each proof before decrypting, its observable behavior (continue or abort) is a function of the proof's validity only, which the server can compute itself.
As such, the client's reactions leak nothing about the decrypted values, which rules out reaction attacks~\cite{cryptoeprint:2016/1164}.

\paragraph{Integrity.}
By soundness of the underlying protocol, any server deviation in round $i$, be it substituting the matrix, corrupting the ciphertext or fabricating the proof, causes $\pi_i$ to fail verification except with negligible probability:
\[
  \Pr\bigl[\mathsf{Verify}(\pi_i) = \acc \;\wedge\; \vect{y}_i \neq \mat{M}_i \cdot \vect{x}_i\bigr] \leq \negl,
\]
where $(\mat{M}_1, \vect{x}_1) = (\mat{A}, \vect{x})$ and $(\mat{M}_2, \vect{x}_2) = (\mat{D}^\top, \vect{e}_{i^*})$.
Each round is verified against its own commitment, so a union bound over the two rounds gives the claim for the composed protocol.

\section{Batched Top-$k$ Retrieval}
\label{sec:topk-batch}

After Round~1, the client can identify the top-$k$ indices $i_1^*, \ldots, i_k^*$ and issue $k$ independent Round-2 queries.
Using a single $k$-hot vector instead does not work: $\mat{D}^\top \cdot \sum_j \vect{e}_{i_j^*}$ yields the \emph{sum} of the $k$ documents, from which the individual documents cannot be recovered.
For moderate $k$, the $k$ queries batch into a single matrix--matrix product $\mat{D}^\top \cdot \mat{E}$, where $\mat{E} \in \{0,1\}^{m \times k}$ has one-hot columns.

\paragraph{Why latency does not grow with $k$.}
The verifiable matrix--vector multiplication of~\cite{cryptoeprint:2026/027} proves the relation $\vect{c}' = \mat{M} \cdot \vect{c}$ with a sum-check protocol compiled with a polynomial commitment scheme (PCS).
Once per database, the server commits to a polynomial encoding of $\mat{M}$.
To answer a query, it runs the sum-check, which reduces the claim $\vect{c}' = \mat{M} \cdot \vect{c}$ to the evaluation of the committed polynomial at a single random point fixed during the protocol, and then produces a PCS evaluation proof for that point.
The evaluation proof is the dominant cost factor: it accounts for over $95\%$ of the server time.

Batching exploits the fact that the sum-check protocol is linear in the claim it proves.
More precisely, given $k$ claims $\vect{c}'_j = \mat{M} \cdot \vect{c}_j$ about the same committed matrix, the verifier samples a random challenge $\gamma$ and the prover proves the single claim $\sum_j \gamma^j \vect{c}'_j = \mat{M} \cdot \sum_j \gamma^j \vect{c}_j$ instead.
If any of the $k$ claims is false, the combined claim is false as well except with probability at most $k/|\mathcal{C}|$ over the choice of $\gamma$, for $\mathcal{C}$ the challenge space.
This additive soundness loss is negligible provided $|\mathcal{C}|$ is large enough.
The batched claim is a matrix--vector relation of the same shape as a single query.
As such, it is proved with one run of the sum-check and one PCS evaluation proof, regardless of $k$.
The server still computes the $k$ products and the sum-check messages of the combined claim, whose cost grows with $k$, but the evaluation proof, which dominates, is generated once.
The $k$ answers are all returned since the client needs each document, which is why communication grows by a factor $k$ whereas the proof is essentially of the same size as for a single query.

\section{Readers, Prompts and Downstream Corpora}
\label{sec:prompts}

\paragraph{Protocol description.}
We evaluate every endpoint, encoder and condition $c \in \{\text{closed}, \text{gold}, \text{float}, 8, 6, 4, 3, 2\}$.
The evaluation of \S\ref{subsec:downstream} works as follows.
(1)~The corpus and the $300$ queries are embedded once in floating point.
For a bit-width $c = b$, both are quantized with the clipped quantizer of Appendix~\ref{sec:quantization}.
They are then scored in integers, as in Round~1, and the top-$k$ passages are kept.
Ties are broken toward the lower index.
The float condition uses the floating-point scores.
The gold condition uses the labelled relevant passages, at most $k$ of them.
The closed condition uses no passage.
(2)~We build one prompt from the template of the endpoint (below), with the passages in rank order.
A prompt that is identical across conditions is answered once.
(3)~The reader answers with greedy decoding.
Decoding is constrained to the three labels for claim verification and capped at $24$ tokens for QA.
(4)~A QA answer is correct if the normalised text of any reference answer appears in the normalised response.
A claim label is correct if it equals the human label.
SciFact accuracy is computed over the $188$ claims with evidence in the corpus.
(5)~We compare per-query correctness under each condition with the float condition on the same queries, using the tests described below.
We record retrieval-side statistics alongside, namely top-$k$ overlap with float and whether a relevant passage is in the top $k$.
The prompts and $k$ were fixed before any bit-width was run. Nothing is tuned on the test queries.

\paragraph{Readers.}
The body reports \texttt{phi-4} ($14$B, MIT licence)~\citep{abdin2024phi4}.
Table~\ref{tab:downstream-readers} adds \texttt{OLMo-2-7B-Instruct}~\citep{olmo2025olmo2}, whose training data is fully open, and the Qwen2.5 instruction-tuned models at $7$B and $72$B~\citep{qwen2025qwen25}.
All are run locally with vLLM and greedy decoding.
Under the one-word prompt, \texttt{OLMo-2} mostly abstains on claim verification.
It answers \textsc{NoInfo} on $87\%$ of SciFact prompts and never \textsc{Contradict}, so its absolute SciFact accuracy is low.
It still reproduces the shape of the degradation curve, which is what we compare across readers.

\paragraph{Prompts.}
The claim-verification prompt states the claim and lists the retrieved abstracts.
It asks for exactly one word from \{\textsc{Support}, \textsc{Contradict}, \textsc{NoInfo}\}.
Decoding is constrained to those three strings. As such, there are no parse failures and no exclusions.
The QA prompt states the question and lists the retrieved passages.
It asks for the answer in a few words, or for exactly \textsc{Unknown} if the passages do not contain it.
Responses are capped at $24$ tokens.
Prompts are fixed across bit-widths and encoders. Only the retrieved passages change.
One SciFact $k{=}5$ prompt exceeds \texttt{OLMo-2}'s $4{,}096$-token context and is excluded for that reader.

\paragraph{BioASQ corpus.}
We build the corpus from BioASQ-QA~\citep{krithara2023bioasq} (\texttt{training11b}, $4{,}719$ questions).
For every question of every type, we group the evidence snippets by source document.
We concatenate them into one passage per (question, document) pair.
This gives $35{,}454$ passages with a median length of $220$ characters.
Queries are a seeded sample of $300$ factoid questions with a non-empty \texttt{exact\_answer} and at most three gold passages.
Of these, $190$ have one gold passage, $54$ have two and $56$ have three.
The cap keeps gold-in-top-$k$ from being trivially high.
The relevant set of a question is its own passages.
A response is correct if the normalised text of any reference answer synonym appears in it.

\paragraph{NQ at scale.}
The $300$ Natural Questions test queries are the $100$ of the original evaluation plus $200$ more.
These $200$ are a seeded draw among the queries whose question matches an NQ-open answer list.
Passages are retrieved from the full BEIR corpus of $2{,}681{,}468$ Wikipedia passages.
We use the same embeddings as the scale run of Appendix~\ref{sec:encoder-suite}.

\paragraph{Statistics.}
For every (reader, endpoint, encoder) and bit-width, we report the paired difference in accuracy from floating point, with a normal-approximation $95\%$ interval on the per-query differences.
Table~\ref{tab:downstream-readers} gives these differences at $b \in \{4,3,2\}$.
We test for a drop with an exact McNemar test on the discordant pairs.
A non-significant McNemar test is not evidence of equivalence.
As such, we also run two one-sided tests (TOST) with equivalence margins of $5$ and $3$ points.
A bit-width is declared equivalent to floating point when both one-sided tests reject at $p<0.05$, i.e. when the $90\%$ interval lies inside the margin.
At $b{=}4$, all $32$ combinations are equivalent within $5$ points and $23$ are equivalent within $3$.
At $b{=}3$, these counts are $30$ and $17$. At $b{=}2$, they are $9$ and $2$.

\begin{table}[ht]
\centering
\caption{Paired accuracy difference from floating point, in points with $95\%$ interval, at $b{=}4$, $3$ and $2$. All readers, endpoints and encoders use the clipped quantizer ($n{=}300$ queries, $188$ evidence-bearing claims for SciFact). $^{*}$: significant drop, exact McNemar $p<0.05$. $^{\dagger}$: equivalent to floating point within $\pm 5$ points, TOST $p<0.05$. ``float'' is the floating-point accuracy.}
\label{tab:downstream-readers}
\footnotesize
\setlength{\tabcolsep}{3.5pt}
\begin{tabular}{@{}llc lll@{}}
\toprule
Endpoint & Encoder & float & $b{=}4$ & $b{=}3$ & $b{=}2$ \\
\midrule
\multicolumn{6}{@{}l}{\emph{\texttt{phi-4} (14B)}}\\
SciFact $k{=}5$ & BGE & $0.734$ & $-1.1$ $[-3.6, +1.5]$$^{\dagger}$ & $-2.7$ $[-5.4, +0.1]$$^{\dagger}$ & $-5.9$ $[-10.1, -1.6]$$^{*}$ \\
SciFact $k{=}5$ & arctic & $0.713$ & $+0.5$ $[-1.3, +2.3]$$^{\dagger}$ & $+0.0$ $[-2.6, +2.6]$$^{\dagger}$ & $-1.6$ $[-4.7, +1.5]$$^{\dagger}$ \\
SciFact $k{=}1$ & BGE & $0.766$ & $+0.5$ $[-0.5, +1.6]$$^{\dagger}$ & $-0.5$ $[-2.3, +1.3]$$^{\dagger}$ & $-8.0$ $[-12.1, -3.8]$$^{*}$ \\
SciFact $k{=}1$ & arctic & $0.739$ & $+0.0$ $[+0.0, +0.0]$$^{\dagger}$ & $+1.6$ $[-0.2, +3.4]$$^{\dagger}$ & $-5.9$ $[-10.1, -1.6]$$^{*}$ \\
BioASQ & BGE & $0.633$ & $+0.3$ $[-1.4, +2.1]$$^{\dagger}$ & $+1.0$ $[-0.7, +2.7]$$^{\dagger}$ & $-2.0$ $[-4.6, +0.6]$$^{\dagger}$ \\
BioASQ & arctic & $0.647$ & $-1.0$ $[-2.5, +0.5]$$^{\dagger}$ & $-0.7$ $[-2.3, +0.9]$$^{\dagger}$ & $-4.0$ $[-6.6, -1.4]$$^{*}$ \\
NQ ($2.68$M) & BGE & $0.643$ & $+0.7$ $[-2.1, +3.4]$$^{\dagger}$ & $-2.0$ $[-5.3, +1.3]$$^{\dagger}$ & $-7.0$ $[-11.5, -2.5]$$^{*}$ \\
NQ ($2.68$M) & arctic & $0.643$ & $-1.7$ $[-3.6, +0.3]$$^{\dagger}$ & $-0.3$ $[-2.9, +2.2]$$^{\dagger}$ & $-4.3$ $[-8.4, -0.3]$ \\
\midrule
\multicolumn{6}{@{}l}{\emph{\texttt{OLMo-2} (7B)}}\\
SciFact $k{=}5$ & BGE & $0.213$ & $-0.5$ $[-3.3, +2.2]$$^{\dagger}$ & $-1.6$ $[-3.9, +0.7]$$^{\dagger}$ & $-6.4$ $[-10.7, -2.0]$$^{*}$ \\
SciFact $k{=}5$ & arctic & $0.198$ & $-1.1$ $[-3.2, +1.0]$$^{\dagger}$ & $-2.1$ $[-4.7, +0.4]$$^{\dagger}$ & $-2.1$ $[-5.4, +1.2]$$^{\dagger}$ \\
SciFact $k{=}1$ & BGE & $0.505$ & $+1.1$ $[-0.4, +2.5]$$^{\dagger}$ & $-1.1$ $[-3.1, +1.0]$$^{\dagger}$ & $-4.8$ $[-8.5, -1.1]$$^{*}$ \\
SciFact $k{=}1$ & arctic & $0.479$ & $+0.0$ $[-1.5, +1.5]$$^{\dagger}$ & $+0.5$ $[-1.3, +2.3]$$^{\dagger}$ & $-3.7$ $[-7.1, -0.3]$ \\
BioASQ & BGE & $0.503$ & $+1.0$ $[-1.5, +3.5]$$^{\dagger}$ & $+0.0$ $[-3.1, +3.1]$$^{\dagger}$ & $-5.0$ $[-9.3, -0.7]$$^{*}$ \\
BioASQ & arctic & $0.520$ & $-0.3$ $[-2.9, +2.2]$$^{\dagger}$ & $-1.3$ $[-3.9, +1.3]$$^{\dagger}$ & $-2.7$ $[-6.0, +0.7]$ \\
NQ ($2.68$M) & BGE & $0.497$ & $+1.0$ $[-2.1, +4.1]$$^{\dagger}$ & $-1.3$ $[-4.9, +2.2]$$^{\dagger}$ & $-5.0$ $[-9.5, -0.5]$$^{*}$ \\
NQ ($2.68$M) & arctic & $0.563$ & $-0.3$ $[-2.5, +1.8]$$^{\dagger}$ & $-3.3$ $[-6.2, -0.4]$$^{*}$ & $-8.3$ $[-12.2, -4.5]$$^{*}$ \\
\midrule
\multicolumn{6}{@{}l}{\emph{\texttt{Qwen2.5-7B}}}\\
SciFact $k{=}5$ & BGE & $0.681$ & $-0.5$ $[-3.7, +2.6]$$^{\dagger}$ & $-1.6$ $[-4.7, +1.5]$$^{\dagger}$ & $-5.3$ $[-9.9, -0.7]$$^{*}$ \\
SciFact $k{=}5$ & arctic & $0.654$ & $-0.5$ $[-2.3, +1.3]$$^{\dagger}$ & $+1.1$ $[-1.5, +3.6]$$^{\dagger}$ & $+0.0$ $[-3.3, +3.3]$$^{\dagger}$ \\
SciFact $k{=}1$ & BGE & $0.665$ & $+0.5$ $[-0.5, +1.6]$$^{\dagger}$ & $-1.1$ $[-3.1, +1.0]$$^{\dagger}$ & $-9.0$ $[-13.4, -4.7]$$^{*}$ \\
SciFact $k{=}1$ & arctic & $0.644$ & $-0.5$ $[-1.6, +0.5]$$^{\dagger}$ & $+0.0$ $[-2.1, +2.1]$$^{\dagger}$ & $-5.9$ $[-10.3, -1.4]$$^{*}$ \\
BioASQ & BGE & $0.477$ & $-1.7$ $[-4.0, +0.7]$$^{\dagger}$ & $-1.3$ $[-3.9, +1.3]$$^{\dagger}$ & $-4.3$ $[-7.8, -0.8]$$^{*}$ \\
BioASQ & arctic & $0.470$ & $-0.3$ $[-2.3, +1.6]$$^{\dagger}$ & $-1.0$ $[-3.7, +1.7]$$^{\dagger}$ & $-1.7$ $[-4.8, +1.5]$$^{\dagger}$ \\
NQ ($2.68$M) & BGE & $0.353$ & $+1.7$ $[-1.0, +4.4]$$^{\dagger}$ & $+1.7$ $[-1.6, +4.9]$$^{\dagger}$ & $-1.3$ $[-5.6, +2.9]$$^{\dagger}$ \\
NQ ($2.68$M) & arctic & $0.413$ & $-1.7$ $[-3.8, +0.5]$$^{\dagger}$ & $-2.3$ $[-5.2, +0.5]$$^{\dagger}$ & $-5.3$ $[-9.4, -1.2]$$^{*}$ \\
\midrule
\multicolumn{6}{@{}l}{\emph{\texttt{Qwen2.5-72B}}}\\
SciFact $k{=}5$ & BGE & $0.830$ & $+0.5$ $[-1.3, +2.3]$$^{\dagger}$ & $+0.5$ $[-1.3, +2.3]$$^{\dagger}$ & $-2.7$ $[-5.4, +0.1]$$^{\dagger}$ \\
SciFact $k{=}5$ & arctic & $0.819$ & $+0.0$ $[+0.0, +0.0]$$^{\dagger}$ & $+0.5$ $[-1.3, +2.3]$$^{\dagger}$ & $-0.5$ $[-3.3, +2.2]$$^{\dagger}$ \\
SciFact $k{=}1$ & BGE & $0.830$ & $+0.5$ $[-0.5, +1.6]$$^{\dagger}$ & $-1.1$ $[-3.1, +1.0]$$^{\dagger}$ & $-9.0$ $[-13.7, -4.4]$$^{*}$ \\
SciFact $k{=}1$ & arctic & $0.793$ & $+0.0$ $[+0.0, +0.0]$$^{\dagger}$ & $+0.0$ $[-2.6, +2.6]$$^{\dagger}$ & $-7.4$ $[-12.0, -2.9]$$^{*}$ \\
BioASQ & BGE & $0.580$ & $+0.0$ $[-0.9, +0.9]$$^{\dagger}$ & $+0.7$ $[-0.9, +2.3]$$^{\dagger}$ & $-1.7$ $[-4.5, +1.2]$$^{\dagger}$ \\
BioASQ & arctic & $0.597$ & $+0.0$ $[-1.3, +1.3]$$^{\dagger}$ & $-0.3$ $[-1.5, +0.8]$$^{\dagger}$ & $-3.3$ $[-6.1, -0.6]$$^{*}$ \\
NQ ($2.68$M) & BGE & $0.573$ & $-1.7$ $[-4.0, +0.7]$$^{\dagger}$ & $-2.7$ $[-5.7, +0.4]$ & $-6.7$ $[-10.5, -2.8]$$^{*}$ \\
NQ ($2.68$M) & arctic & $0.567$ & $+0.3$ $[-1.4, +2.1]$$^{\dagger}$ & $+0.0$ $[-2.3, +2.3]$$^{\dagger}$ & $-5.0$ $[-8.7, -1.3]$$^{*}$ \\
\bottomrule
\end{tabular}
\end{table}

\paragraph{Chunked SciFact.}
Table~\ref{tab:downstream-chunks} gives the SciFact endpoint for \texttt{phi-4} when the reader receives the top-$k$ chunks of \S\ref{sec:experiments} instead of whole abstracts.
Chunks are cut at sentence boundaries and hold at most $1{,}536$ bytes including the title (\texttt{implem/scale/chunk\_corpus.py}).
The $5{,}183$ abstracts give $7{,}510$ chunks of $1{,}070$ bytes on average.
In floating point, a chunk of a gold abstract is in the top $5$ for $79\%$ of claims with BGE and $77\%$ with arctic.
For whole abstracts, these figures are $80$ and $79\%$.
NFCorpus, chunked the same way, gives $5{,}761$ chunks, and $44\%$ of its abstracts fit in one.
In floating point, the reader does better on five chunks than on five abstracts, by $4.3$ points with BGE and $2.1$ with arctic.
At $k{=}1$ it does about one point worse, since a single chunk may leave out the evidence sentence.
Under quantization, the two encoders separate one bit earlier than on abstracts.
The reason is that chunks of one abstract are near-duplicates, and quantization reorders them.
Judged per document, the ranking survives exactly as on abstracts (Table~\ref{tab:quality-sweep}).
The reader, judged per chunk, achieves top-1 agreement at $b{=}3$ is $0.82$ for BGE and $0.89$ for arctic, against $0.87$ and $0.93$ per document (Appendix~\ref{sec:encoder-suite}).
With BGE at $k{=}5$, the loss reaches $4.3$ points at $b{=}3$ ($p = 0.008$).
On abstracts, it was $2.7$ points and not significant.
With arctic at $k{=}5$, every bit-width down to $b{=}2$ stays within $5$ points of floating point, as on abstracts.
As such, chunking moves the operating point of the conventional encoder from $b{=}3$ to $b{=}4$.
The quantization-aware encoder stays at $b{=}2$.

\begin{table}[ht]
\centering
\caption{SciFact with chunks against whole abstracts, reader \texttt{phi-4}. We give floating-point accuracy on the $188$ evidence-bearing claims. We also give the paired difference from floating point at $b{=}4$, $3$ and $2$, in points with $95\%$ interval. $^{*}$: significant drop, exact McNemar $p<0.05$. $^{\dagger}$: equivalent within $\pm 5$ points, TOST $p<0.05$.}
\label{tab:downstream-chunks}
\footnotesize
\setlength{\tabcolsep}{3.5pt}
\begin{tabular}{@{}lllc lll@{}}
\toprule
Endpoint & Encoder & Unit & float & $b{=}4$ & $b{=}3$ & $b{=}2$ \\
\midrule
SciFact $k{=}5$ & BGE & abstracts & $0.734$ & $-1.1$ $[-3.6, +1.5]$$^{\dagger}$ & $-2.7$ $[-5.4, +0.1]$$^{\dagger}$ & $-5.9$ $[-10.1, -1.6]$$^{*}$ \\
 & & chunks & $0.777$ & $-2.7$ $[-5.0, -0.4]$$^{\dagger}$ & $-4.3$ $[-7.1, -1.4]$$^{*}$ & $-6.9$ $[-11.4, -2.5]$$^{*}$ \\
SciFact $k{=}5$ & arctic & abstracts & $0.713$ & $+0.5$ $[-1.3, +2.3]$$^{\dagger}$ & $+0.0$ $[-2.6, +2.6]$$^{\dagger}$ & $-1.6$ $[-4.7, +1.5]$$^{\dagger}$ \\
 & & chunks & $0.734$ & $+1.6$ $[-0.7, +3.9]$$^{\dagger}$ & $+1.6$ $[-1.9, +5.1]$$^{\dagger}$ & $-1.1$ $[-5.2, +3.1]$$^{\dagger}$ \\
SciFact $k{=}1$ & BGE & abstracts & $0.766$ & $+0.5$ $[-0.5, +1.6]$$^{\dagger}$ & $-0.5$ $[-2.3, +1.3]$$^{\dagger}$ & $-8.0$ $[-12.1, -3.8]$$^{*}$ \\
 & & chunks & $0.755$ & $-0.5$ $[-2.3, +1.3]$$^{\dagger}$ & $-2.1$ $[-4.7, +0.4]$$^{\dagger}$ & $-12.2$ $[-17.6, -6.9]$$^{*}$ \\
SciFact $k{=}1$ & arctic & abstracts & $0.739$ & $+0.0$ $[+0.0, +0.0]$$^{\dagger}$ & $+1.6$ $[-0.2, +3.4]$$^{\dagger}$ & $-5.9$ $[-10.1, -1.6]$$^{*}$ \\
 & & chunks & $0.729$ & $+0.5$ $[-1.8, +2.9]$$^{\dagger}$ & $+0.0$ $[-3.0, +3.0]$$^{\dagger}$ & $-5.9$ $[-10.6, -1.1]$$^{*}$ \\
\bottomrule
\end{tabular}
\end{table}

\section{The Encoder Suite and the Scale Run}
\label{sec:encoder-suite}

In this appendix we give the retrieval-side numbers behind \S\ref{subsec:retrieval} and \S\ref{subsec:costs} for every encoder.
We also compare them with the max-abs quantizer and explain why some encoders lose more quality than others.

\paragraph{Encoders and prefixes.}
Each encoder is run with the instruction prefix its model card prescribes.
\texttt{bge-large-en-v1.5} and \texttt{mxbai-embed-large-v1} prepend ``Represent this sentence for searching relevant passages: '' to queries.
\texttt{e5-large-v2} prepends ``query: '' and ``passage: ''.
\texttt{arctic-embed-l-v2.0} prepends ``query: ''.
\texttt{gte-large} uses no prefix, and Cohere takes an input type instead.
We run \texttt{arctic-embed-l-v2.0} with a $512$-token window instead of its native $8{,}192$, to match the other encoders.
Abstracts rarely exceed this window.
The downstream evaluation of \S\ref{subsec:downstream} uses the clipped quantizer and these prefixes.

\paragraph{Retained quality under the clipped quantizer.}
Table~\ref{tab:suite} gives floating-point nDCG@10 for every encoder and corpus.
It also gives the fraction of it retained at $b{=}4$ and $b{=}3$, and top-1 agreement with the floating-point ranking at the same bit-widths.
Table~\ref{tab:quantizer} gives the same quantities under max-abs scaling next to the clipped values.

\begin{table}[t]
\centering
\caption{Retrieval quality retained of our two-round protocol under clipped quantization, as the percentage of floating-point nDCG@10 capped at $100$, for $b \in \{2,3,4,8\}$ on SciFact and NFCorpus, ranked per chunk and scored per document.}
\label{tab:quality-sweep}
\footnotesize
\setlength{\tabcolsep}{3pt}
\begin{tabular}{@{}c cc cccc c cc cccc@{}}
\toprule
& \multicolumn{6}{c}{SciFact} & & \multicolumn{6}{c}{NFCorpus} \\
\cmidrule(lr){2-7}\cmidrule(lr){9-14}
& \multicolumn{2}{c}{low-bit} & \multicolumn{4}{c}{conventional} & & \multicolumn{2}{c}{low-bit} & \multicolumn{4}{c}{conventional} \\
\cmidrule(lr){2-3}\cmidrule(lr){4-7}\cmidrule(lr){9-10}\cmidrule(lr){11-14}
$b$ & arctic & Cohere & BGE & mxbai & e5 & gte & & arctic & Cohere & BGE & mxbai & e5 & gte \\
\midrule
$2$ & $92.2$ & $89.7$ & $86.4$ & $88.5$ & $82.2$ & $86.6$ &  & $88.2$ & $88.6$ & $85.9$ & $86.5$ & $78.8$ & $83.6$ \\
$3$ & $98.7$ & $99.1$ & $97.4$ & $98.7$ & $97.3$ & $97.5$ &  & $98.2$ & $99.3$ & $99.0$ & $97.4$ & $94.1$ & $97.6$ \\
$\mathbf{4}$ & $99.9$ & $99.4$ & $99.7$ & $99.4$ & $100.0$ & $99.8$ &  & $99.6$ & $99.8$ & $99.5$ & $98.9$ & $96.5$ & $100.0$ \\
$8$ & $99.9$ & $99.5$ & $99.9$ & $100.0$ & $100.0$ & $99.8$ &  & $99.7$ & $100.0$ & $99.7$ & $100.0$ & $98.7$ & $100.0$ \\
\bottomrule
\end{tabular}
\end{table}

\begin{table}[ht]
\centering
\caption{The encoder suite under the clipped quantizer. Floating-point nDCG@10, the percentage of it retained at $b{=}4$ and $b{=}3$, and top-1 agreement with the floating-point ranking at the same bit-widths. \cmark: trained for low-bit readout.}
\label{tab:suite}
\footnotesize
\setlength{\tabcolsep}{3pt}
\begin{tabular}{@{}lc ccccc ccccc@{}}
\toprule
& & \multicolumn{5}{c}{SciFact} & \multicolumn{5}{c}{NFCorpus} \\
\cmidrule(lr){3-7}\cmidrule(lr){8-12}
Encoder & low-bit & float & ret.\ $4$ & ret.\ $3$ & top-1 $4$ & top-1 $3$ & float & ret.\ $4$ & ret.\ $3$ & top-1 $4$ & top-1 $3$ \\
\midrule
\texttt{arctic-embed-l-v2.0} & \cmark & $0.706$ & $100.4$ & $99.4$ & $0.96$ & $0.93$ & $0.354$ & $99.6$ & $98.0$ & $0.91$ & $0.82$ \\
\texttt{Cohere embed-v3} & \cmark & $0.718$ & $99.6$ & $98.7$ & $0.96$ & $0.91$ & $0.387$ & $99.5$ & $98.9$ & $0.86$ & $0.78$ \\
\texttt{bge-large-en-v1.5} & \xmark & $0.746$ & $99.4$ & $97.1$ & $0.94$ & $0.87$ & $0.381$ & $99.5$ & $98.6$ & $0.83$ & $0.71$ \\
\texttt{mxbai-embed-large-v1} & \xmark & $0.739$ & $100.3$ & $100.1$ & $0.93$ & $0.89$ & $0.387$ & $100.1$ & $98.9$ & $0.84$ & $0.73$ \\
\texttt{e5-large-v2} & \xmark & $0.722$ & $99.4$ & $96.8$ & $0.88$ & $0.77$ & $0.372$ & $96.3$ & $90.8$ & $0.79$ & $0.64$ \\
\texttt{gte-large} & \xmark & $0.743$ & $99.7$ & $97.8$ & $0.94$ & $0.81$ & $0.383$ & $100.0$ & $95.2$ & $0.81$ & $0.62$ \\
\bottomrule
\end{tabular}
\end{table}

\begin{table}[ht]
\centering
\caption{The quantizer lever. Each cell reads max-abs scaling $\to$ clipping at the $99$th percentile. We give the percentage of floating-point nDCG@10 retained and top-1 agreement at $b{=}4$ and $b{=}3$. Nothing on the cryptographic side differs between the two columns of each pair.}
\label{tab:quantizer}
\footnotesize
\setlength{\tabcolsep}{5pt}
\begin{tabular}{@{}l cc cc@{}}
\toprule
& \multicolumn{2}{c}{$b{=}4$} & \multicolumn{2}{c}{$b{=}3$} \\
\cmidrule(lr){2-3}\cmidrule(lr){4-5}
Encoder & ret.\ (\%) & top-1 & ret.\ (\%) & top-1 \\
\midrule
\multicolumn{5}{@{}l}{\emph{SciFact}}\\
\texttt{arctic-embed-l-v2.0} & $100.0 \to 100.4$ & $0.93 \to 0.96$ & $97.6 \to 99.4$ & $0.83 \to 0.93$ \\
\texttt{Cohere embed-v3} & $100.0 \to 99.6$ & $0.93 \to 0.96$ & $95.4 \to 98.7$ & $0.74 \to 0.91$ \\
\texttt{bge-large-en-v1.5} & $97.1 \to 99.4$ & $0.79 \to 0.94$ & $77.3 \to 97.1$ & $0.54 \to 0.87$ \\
\texttt{mxbai-embed-large-v1} & $99.5 \to 100.3$ & $0.81 \to 0.93$ & $82.3 \to 100.1$ & $0.57 \to 0.89$ \\
\texttt{e5-large-v2} & $96.3 \to 99.4$ & $0.79 \to 0.88$ & $82.4 \to 96.8$ & $0.56 \to 0.77$ \\
\texttt{gte-large} & $97.3 \to 99.7$ & $0.81 \to 0.94$ & $88.1 \to 97.8$ & $0.68 \to 0.81$ \\
\midrule
\multicolumn{5}{@{}l}{\emph{NFCorpus}}\\
\texttt{arctic-embed-l-v2.0} & $98.5 \to 99.6$ & $0.83 \to 0.91$ & $96.2 \to 98.0$ & $0.69 \to 0.82$ \\
\texttt{Cohere embed-v3} & $98.5 \to 99.5$ & $0.83 \to 0.86$ & $96.4 \to 98.9$ & $0.61 \to 0.78$ \\
\texttt{bge-large-en-v1.5} & $96.0 \to 99.5$ & $0.67 \to 0.83$ & $73.4 \to 98.6$ & $0.35 \to 0.71$ \\
\texttt{mxbai-embed-large-v1} & $96.2 \to 100.1$ & $0.63 \to 0.84$ & $77.4 \to 98.9$ & $0.37 \to 0.73$ \\
\texttt{e5-large-v2} & $93.0 \to 96.3$ & $0.65 \to 0.79$ & $78.4 \to 90.8$ & $0.37 \to 0.64$ \\
\texttt{gte-large} & $96.3 \to 100.0$ & $0.62 \to 0.81$ & $85.0 \to 95.2$ & $0.42 \to 0.62$ \\
\bottomrule
\end{tabular}
\end{table}

\paragraph{Why some encoders lose more than others.}
Quantization adds a small rounding error to every score.
A query keeps its top-1 document if the gap between its two best scores is larger than this error.
Rounding to a step $s$ gives an error that is uniform on $[-s/2, s/2]$, whose variance is $s^2/12$.
The error on a score therefore has standard deviation $\sqrt{(s_{\mat{A}}^2 + s_{\vect{x}}^2)/12}$, where $s_{\mat{A}}$ is the corpus step and $s_{\vect{x}}$ the query step.
For each query, we divide the gap by this standard deviation. We then take the median over queries.
Table~\ref{tab:margin} gives this ratio for every encoder at $b{=}4$.
The larger the ratio, the more often the top-1 document survives.
Over the $12$ encoder-corpus pairs, the ratio and the measured top-1 agreement have Spearman correlation $0.94$ at $b{=}4$ and $0.99$ at $b{=}3$.
The two encoders trained for low-bit readout have the largest ratios. BGE and mxbai come next, and e5 and gte have the smallest.
The table also explains why clipping helps BGE and mxbai.
Their largest coordinate holds over $5\%$ of the squared norm, against at most $1.9\%$ for the other encoders.
Under max-abs scaling, this one coordinate sets the step.
For BGE on SciFact, the step would be $0.305/7 \approx 0.044$ instead of the clipped $0.011$, i.e. about $4\times$ coarser.

\begin{table}[ht]
\centering
\caption{Score gap against rounding error at $b{=}4$, clipped quantizer. $\max|\mat{A}|$: largest corpus coordinate. $s_{\mat{A}}$: clipped step. top dim.: share of the squared norm in the largest coordinate. gap: median gap between the two best floating-point scores. gap/noise: median ratio of that gap to the rounding error. top-1: top-1 agreement with floating point.}
\label{tab:margin}
\small
\begin{tabular}{@{}llcccccc@{}}
\toprule
Encoder & Corpus & $\max|\mat{A}|$ & $s_{\mat{A}}$ & top dim.\ (\%) & gap & gap/noise & top-1 \\
\midrule
\texttt{arctic-embed-l-v2.0} & SciFact & $0.200$ & $0.0122$ & $0.5$ & $0.0582$ & $11.72$ & $0.96$ \\
\texttt{Cohere embed-v3} & SciFact & $0.221$ & $0.0125$ & $0.8$ & $0.0445$ & $8.83$ & $0.96$ \\
\texttt{bge-large-en-v1.5} & SciFact & $0.305$ & $0.0110$ & $5.5$ & $0.0319$ & $7.00$ & $0.94$ \\
\texttt{mxbai-embed-large-v1} & SciFact & $0.311$ & $0.0111$ & $5.4$ & $0.0345$ & $7.73$ & $0.93$ \\
\texttt{e5-large-v2} & SciFact & $0.170$ & $0.0105$ & $1.9$ & $0.0149$ & $3.48$ & $0.88$ \\
\texttt{gte-large} & SciFact & $0.172$ & $0.0107$ & $1.7$ & $0.0189$ & $4.38$ & $0.94$ \\
\midrule
\texttt{arctic-embed-l-v2.0} & NFCorpus & $0.194$ & $0.0123$ & $0.5$ & $0.0268$ & $5.22$ & $0.91$ \\
\texttt{Cohere embed-v3} & NFCorpus & $0.216$ & $0.0126$ & $1.0$ & $0.0211$ & $4.19$ & $0.86$ \\
\texttt{bge-large-en-v1.5} & NFCorpus & $0.287$ & $0.0109$ & $5.7$ & $0.0117$ & $2.64$ & $0.83$ \\
\texttt{mxbai-embed-large-v1} & NFCorpus & $0.292$ & $0.0110$ & $5.8$ & $0.0135$ & $2.97$ & $0.84$ \\
\texttt{e5-large-v2} & NFCorpus & $0.165$ & $0.0105$ & $1.9$ & $0.0063$ & $1.50$ & $0.79$ \\
\texttt{gte-large} & NFCorpus & $0.175$ & $0.0107$ & $1.7$ & $0.0073$ & $1.67$ & $0.81$ \\
\bottomrule
\end{tabular}
\end{table}

\begin{table}[ht]
\centering
\caption{Table~\ref{tab:quality-sweep} on whole abstracts instead of chunks. This checks comparability against published document-level baselines. We give the percentage of floating-point nDCG@10 retained under clipped quantization for each bit-width $b$.}
\label{tab:quality-sweep-abstracts}
\small
\begin{tabular}{@{}c cc cccc@{}}
\toprule
& \multicolumn{2}{c}{Trained for low-bit readout} & \multicolumn{4}{c}{Conventional} \\
\cmidrule(lr){2-3}\cmidrule(lr){4-7}
$b$ & arctic & Cohere & BGE & mxbai & e5 & gte \\
\midrule
\multicolumn{7}{@{}l}{\emph{SciFact}}\\
$2$ & $91.8$ & $92.1$ & $87.0$ & $90.4$ & $84.0$ & $87.6$ \\
$3$ & $99.4$ & $98.7$ & $97.1$ & $100.1$ & $96.8$ & $97.8$ \\
$\mathbf{4}$ & $100.4$ & $99.6$ & $99.4$ & $100.3$ & $99.4$ & $99.7$ \\
$5$ & $100.7$ & $99.8$ & $99.0$ & $99.9$ & $101.3$ & $99.2$ \\
$6$ & $100.1$ & $99.7$ & $99.4$ & $100.0$ & $101.2$ & $99.7$ \\
$7$ & $100.2$ & $99.8$ & $99.2$ & $100.0$ & $100.9$ & $99.7$ \\
$8$ & $100.0$ & $99.7$ & $99.1$ & $100.2$ & $100.9$ & $99.8$ \\
\midrule
\multicolumn{7}{@{}l}{\emph{NFCorpus}}\\
$2$ & $87.2$ & $88.7$ & $83.2$ & $87.4$ & $75.0$ & $82.7$ \\
$3$ & $98.0$ & $98.9$ & $98.6$ & $98.9$ & $90.8$ & $95.2$ \\
$\mathbf{4}$ & $99.6$ & $99.5$ & $99.5$ & $100.1$ & $96.3$ & $100.0$ \\
$5$ & $99.2$ & $99.8$ & $100.6$ & $100.7$ & $97.8$ & $98.7$ \\
$6$ & $99.9$ & $99.7$ & $100.4$ & $101.0$ & $98.5$ & $99.8$ \\
$7$ & $99.9$ & $99.8$ & $100.5$ & $100.7$ & $98.6$ & $99.7$ \\
$8$ & $99.6$ & $99.7$ & $100.4$ & $100.8$ & $98.5$ & $99.7$ \\
\bottomrule
\end{tabular}
\end{table}

\paragraph{Chunk-level retrieval.}
We first judge the chunked SciFact corpus per document, as in Table~\ref{tab:quality-sweep}.
Top-1 agreement at $b{=}4$ and $b{=}3$ is then $0.94$ and $0.87$ for BGE and $0.96$ and $0.92$ for arctic.
These are the same values as on whole abstracts (Table~\ref{tab:quality-sweep-abstracts}).
We then judge it per chunk, counting every chunk of a relevant abstract as relevant.
Top-1 agreement is then $0.92$ and $0.82$ for BGE and $0.93$ and $0.89$ for arctic.
The difference comes from chunks of one abstract trading places under rounding.
The reader of \S\ref{subsec:downstream} sees this reordering, but a document ranking does not.

\paragraph{Quality at scale.}
Tables~\ref{tab:scale-full} and~\ref{tab:scale-recall5} give the measurements behind the quality-at-scale paragraph of \S\ref{subsec:retrieval}.
They cover the five encoders of the suite that run locally, namely arctic, BGE, mxbai, e5 and gte.
Cohere is hosted and was not run at this scale.
The corpus is the BEIR Natural Questions passage collection ($2{,}681{,}468$ passages) with its $3{,}452$ test queries.
We draw subsets of $10^4$, $10^5$ and $10^6$ passages at random with a fixed seed.
Each subset contains every passage relevant to any query. As such, nDCG is defined at every size.
Embeddings are computed once per encoder on one H100 and stored in half precision.
The clipped percentile is estimated from a random sample of $4{,}096$ rows of each subset.
Quantized scores are computed in single precision.
This is exact for $b \leq 8$, since every partial sum is an integer below $2^{24}$.
We also check the scores against 64-bit integer arithmetic.
Ties are broken toward the lower index.

\paragraph{Recall at the reader's $k$.}
The sweep above scores at $k{=}10$.
The readers of \S\ref{subsec:downstream} consume five passages.
As such, we repeated the sweep at $k{=}5$ and added $b{=}2$.
Table~\ref{tab:scale-recall5} gives Recall@5 retained at every size.
At $b{=}4$, four of the five encoders retain at least $98.8\%$ of floating-point Recall@5 at every size.
The fifth, e5, retains $97.3\%$ at $2.68$M.
At $b{=}3$, arctic, BGE and mxbai retain at least $98.4\%$.
At $2.68$M, gte and e5 fall to $94.5$ and $92.0\%$.
At $b{=}2$, the loss grows with the corpus.
It goes from $1$ to $4\%$ at $10^4$ to $10$ to $23\%$ at $2.68$M.
The lead of arctic over each conventional encoder also grows, from $0.8$ to $2.7$ points at $10^4$ to $2.1$ to $12.8$ points at $2.68$M.
This is the retrieval-side counterpart of the downstream drop of \S\ref{subsec:downstream}.
On the $300$ NQ questions of that section, the top-$5$ set of BGE contained a relevant passage for $70\%$ of questions in floating point and $60\%$ at $b{=}2$.
This is $86\%$ retained, against $86.1\%$ measured here on all $3{,}452$ queries.

\begin{table}[ht]
\centering
\caption{Recall@5 retained under the clipped quantizer versus corpus size on BEIR Natural Questions ($3{,}452$ queries), at $b \in \{2,3,4\}$. Floating-point Recall@5 is given for reference. Generated by \texttt{implem/analysis/scale\_recall5\_table.py}.}
\label{tab:scale-recall5}
\footnotesize
\setlength{\tabcolsep}{5pt}
\begin{tabular}{@{}ll c ccc@{}}
\toprule
& & & \multicolumn{3}{c}{Recall@5 retained (\%)} \\
\cmidrule(lr){4-6}
Encoder & $m$ & Recall@5 (float) & $b{=}2$ & $b{=}3$ & $b{=}4$ \\
\midrule
\texttt{arctic-embed-l-v2.0} & $10^4$ & $0.988$ & $99.2$ & $99.9$ & $100.0$ \\
 & $10^5$ & $0.967$ & $97.0$ & $99.9$ & $99.9$ \\
 & $10^6$ & $0.848$ & $93.1$ & $99.5$ & $99.7$ \\
 & $2.68{\cdot}10^6$ & $0.737$ & $89.9$ & $99.0$ & $99.8$ \\
\midrule
\texttt{bge-large-en-v1.5} & $10^4$ & $0.982$ & $97.7$ & $99.9$ & $100.1$ \\
 & $10^5$ & $0.944$ & $94.7$ & $99.2$ & $99.5$ \\
 & $10^6$ & $0.794$ & $88.3$ & $98.7$ & $98.8$ \\
 & $2.68{\cdot}10^6$ & $0.655$ & $86.1$ & $98.5$ & $99.7$ \\
\midrule
\texttt{mxbai-embed-large-v1} & $10^4$ & $0.983$ & $98.4$ & $99.9$ & $100.1$ \\
 & $10^5$ & $0.944$ & $96.3$ & $99.6$ & $99.9$ \\
 & $10^6$ & $0.800$ & $90.4$ & $98.8$ & $99.4$ \\
 & $2.68{\cdot}10^6$ & $0.661$ & $87.7$ & $98.4$ & $100.6$ \\
\midrule
\texttt{e5-large-v2} & $10^4$ & $0.988$ & $96.5$ & $99.5$ & $99.8$ \\
 & $10^5$ & $0.961$ & $92.1$ & $98.5$ & $99.5$ \\
 & $10^6$ & $0.847$ & $82.7$ & $95.8$ & $98.3$ \\
 & $2.68{\cdot}10^6$ & $0.745$ & $77.1$ & $92.0$ & $97.3$ \\
\midrule
\texttt{gte-large} & $10^4$ & $0.986$ & $97.2$ & $99.7$ & $99.8$ \\
 & $10^5$ & $0.945$ & $94.4$ & $99.4$ & $100.0$ \\
 & $10^6$ & $0.795$ & $85.9$ & $96.3$ & $99.9$ \\
 & $2.68{\cdot}10^6$ & $0.658$ & $82.6$ & $94.5$ & $99.6$ \\
\bottomrule
\end{tabular}
\end{table}

\begin{table}[ht]
\centering
\caption{Quality versus corpus size on BEIR Natural Questions with the clipped quantizer. We give nDCG@10 retained at $b{=}3$ and $b{=}4$. We also give the smallest $b$ that retains $99\%$ of floating-point nDCG@10, or $>8$ when no $b$ up to $8$ does. Generated by \texttt{implem/analysis/scale\_analysis.py}.}
\label{tab:scale-full}
\footnotesize
\setlength{\tabcolsep}{5pt}
\begin{tabular}{@{}ll cc c@{}}
\toprule
& & \multicolumn{2}{c}{nDCG@10 ret.\ (\%)} & required $b$ \\
\cmidrule(lr){3-4}
Encoder & $m$ & $b{=}3$ & $b{=}4$ & (top-$k$) \\
\midrule
\texttt{arctic-embed-l-v2.0} & $10^4$ & $100.0$ & $99.9$ & $3$ \\
 & $10^5$ & $99.9$ & $100.1$ & $3$ \\
 & $10^6$ & $99.1$ & $99.9$ & $3$ \\
 & $2.68{\cdot}10^6$ & $98.5$ & $99.7$ & $4$ \\
\midrule
\texttt{bge-large-en-v1.5} & $10^4$ & $99.5$ & $99.9$ & $3$ \\
 & $10^5$ & $99.0$ & $99.6$ & $3$ \\
 & $10^6$ & $98.0$ & $99.1$ & $4$ \\
 & $2.68{\cdot}10^6$ & $98.1$ & $99.3$ & $4$ \\
\midrule
\texttt{mxbai-embed-large-v1} & $10^4$ & $99.8$ & $100.0$ & $3$ \\
 & $10^5$ & $99.5$ & $100.0$ & $3$ \\
 & $10^6$ & $98.9$ & $100.1$ & $4$ \\
 & $2.68{\cdot}10^6$ & $98.3$ & $99.9$ & $4$ \\
\midrule
\texttt{e5-large-v2} & $10^4$ & $99.0$ & $99.7$ & $3$ \\
 & $10^5$ & $97.7$ & $99.3$ & $4$ \\
 & $10^6$ & $93.9$ & $98.1$ & $8$ \\
 & $2.68{\cdot}10^6$ & $91.9$ & $97.4$ & $>8$ \\
\midrule
\texttt{gte-large} & $10^4$ & $99.4$ & $99.9$ & $3$ \\
 & $10^5$ & $98.4$ & $99.9$ & $4$ \\
 & $10^6$ & $95.7$ & $99.9$ & $4$ \\
 & $2.68{\cdot}10^6$ & $94.0$ & $100.6$ & $4$ \\
\bottomrule
\end{tabular}
\end{table}

\end{document}